\theoremstyle{definition}
\newtheorem{algorithm}{Algorithm}[section]
\newtheorem{assumption}{Assumption}[section]
\newtheorem{procedure}{Procedure}[section]
\theoremstyle{theorem}
\newtheorem{theorem}{Theorem}[section]
\newtheorem{proposition}{Proposition}[section]
\numberwithin{equation}{section}
 \newcommand \bg{\beta}
 \newcommand \Gg{\Gamma}
 \newcommand \eg{\varepsilon}
 \newcommand \sg{\sigma}
\newcommand\bGa{\mbox{\boldmath${\Gamma}$}}
\newcommand\bGg{\mbox{\boldmath${\Gg}$}}
\newcommand\bSig{{\boldsymbol \Sigma}}
\newcommand\bXi{\mbox{\boldmath${\Xi}$}}
\newcommand\bone{\mbox{\boldmath${1}$}}
\newcommand\bA{{\bf A}}
\newcommand\bD{{\bf D}}
\newcommand\bE{{\bf E}}
\newcommand\be{{\bf e}}
\newcommand\bI{{\bf I}}
\newcommand\bG{{\bf G}}
\newcommand\bM{{\bf M}}
\newcommand\bQ{{\bf Q}}
\newcommand\bs{{\bf s}}
\newcommand\bU{{\bf U}}
\newcommand\bV{{\bf V}}
\newcommand\bW{{\bf W}}
\newcommand\bX{{\bf X}}
\newcommand\bZ{{\bf Z}}
\newcommand\cD{{\mathcal D}}
\newcommand\cI{{\mathcal I}}
\newcommand\cS{{\mathcal S}}
\newcommand\cT{{\mathcal T}}
\newcommand\cU{{\mathcal U}}
\newcommand\cV{{\mathcal V}}
\newcommand\tp{{t^\prime}}
\DeclareMathOperator{\E}{E}
\DeclareMathOperator{\vc}{vec}
\DeclareMathOperator{\Cov}{Cov}
\DeclareMathOperator{\tr}{tr}
\DeclareMathOperator{\Var}{Var}
\newcommand{\aln}[1]{\begin{align*}#1\end{align*}}
\begin{document}

\title{Testing separability of space--time  functional processes}
\author{Panayiotis Constantinou\\
{\small Pennsylvania State University}
\and
Piotr Kokoszka\\
{\small Colorado State University}
\and
Matthew Reimherr\thanks{\small {\em Corresponding author:}
Department of Statistics,
Pennsylvania State University, 411 Thomas Building,
University Park, PA 16802, USA. \
mreimherr@psu.edu \
(814) 865-2544 } \\
{\small Pennsylvania State University}
}
\date{}
\maketitle

%\tableofcontents

\begin{abstract}
We present a new methodology and accompanying theory to test for separability of
spatio--temporal functional data.  In spatio--temporal
statistics, separability is a common simplifying assumption concerning the covariance structure which, if
true, can greatly increase estimation accuracy and inferential power.
While our focus is on testing for the separation of space and time in
spatio-temporal data, our methods can be applied to any area where
separability is useful, including biomedical imaging.
We present three tests, one being a functional extension of
the Monte Carlo likelihood method of
\citetext{mitchell:genton:gumpertz:2005}, while the other two are
based on quadratic forms.  Our tests are based on asymptotic
distributions of maximum likelihood estimators, and do not require
Monte Carlo or bootstrap replications. The specification of the joint
asymptotic distribution of these estimators is the main theoretical
contribution of this paper. It can be used to derive many other tests.
The main methodological finding is that one of the quadratic form
methods, which we call a \textit{norm approach}, emerges as a clear
winner in terms of finite sample performance in nearly every setting
we considered.  The norm approach focuses directly on the Frobenius
distance between the spatio--temporal  covariance function and its separable
approximation.  We demonstrate the efficacy of our methods via
simulations and an application to Irish wind data.
\end{abstract}

\section{Introduction} \label{s:i}
The  assumption of separability is used heavily in spatio--temporal
statistics,
\citetext{haas:1995}, \citetext{genton:2007}, \citetext{hoff:2011},
\citetext{paul:peng:2011},  \citetext{sun:li:genton:2012},
among many others. It is introduced in many textbooks, e.g.
\citetext{schabenberger:gotway:2005},
\citetext{sherman:2011}.
 Separability means that the spatio--temporal covariance structure
 factors into the product of two functions, one depending only on
 space, the other only on time.  Such an assumption provides a number
 of benefits.  From a modeling perspective, it allows one to draw on
 the large literature on covariance structures for spatial or temporal
 data.  The simpler structure induced by separability is then much
 easier to estimate than a nonseparable structure.  In the context of
 multivariate spatio--temporal data, the separability assumption can
 be stated in terms of the factorization of the covariance matrix.
 For more complex spatio--temporal data structures, analogous
 definitions can be formulated, as we explain below.  The work
 presented in this paper is motivated by {\em geostatistical
 functional data}; functions are observed at a number of spatial
 locations, though our methods can be readily generalized to a number
 of areas. For example, in biomedical imaging,  such as fMRI, one often
 has data in both space (the brain) and time,  separability can greatly
 simplify modeling.  In our context, separability
 implies that the optimal functions used for (temporal) dimension
 reduction are the same at every spatial location; information can
 then be pooled across spatial locations to get very good estimates of
 these functions. 
 %This paper is concerned with testing the assumption of separability for geostatistical functional data. Such data are quite common. 
 Geostatistical functional data are quite common.  
 Perhaps the best known example is provided by annual
 temperature and log--precipitation curves (averaged over several
 decades) at several dozen locations in Canada. These data have been
 used in many examples in the monograph of of
 \citetext{ramsay:silverman:2005} and many research papers that
 followed, \citetext{delicado:2010} provide further references. Our
 own work has been concerned with such data as well;
 \citetext{gromenko:kokoszka:2012big} and Gromenko and Kokoszka
 (\citeyear{gromenko:kokoszka:2012mf},
\citeyear{gromenko:kokoszka:2013np}) study curves describing the
evolution of certain ionospheric parameters measured at
globally distributed locations at which radar--type instruments called
ionosondes operate. In \citetext{gromenko:kokoszka:reimherr:2015},
we study precipitation measurements extending over several
decades at about sixty locations in the Midwest.

Tests of separability for spatio--temporal covariances of {\em scalar}
fields are reviewed in Mitchell {\it et al.}
(\citeyear{mitchell:genton:gumpertz:2005},
\citeyear{mitchell:genton:gumpertz:2006}) and \citetext{fuentes:2006}.
If the spatio--temporal covariance has a specific parametric form, a
likelihood ratio test is possible. A similar parametric approach, in
conjunction with bootstrap, is taken by \citetext{liu:ray:hooker:2014}
in the context of functional data.
\citetext{mitchell:genton:gumpertz:2006} introduce a more general,
nonparametric  LRT
test, which requires that the number of repeated measurements be
greater than the product of the number of spatial and temporal
locations. We explain their idea in greater detail in the following.
\citetext{mitchell:genton:gumpertz:2005} explain how to
deal with this restrictions by dividing the temporal domain into
blocks. The test of \citetext{fuentes:2006} is based on the spectral
representation which assumes that the data are available on a spatial
grid. The data that motivate this research are not of this form.

We now explain the contribution of this paper in more specific terms.
It is instructive to begin by summarizing the procedure of
\citetext{mitchell:genton:gumpertz:2006}.  Suppose we observe $N$ iid
scalar fields at temporal points $t_i$ and spatial locations $s_k$,
so that the data are replications of the spatio--temporal observations:
\[
X_n(\bs_k; t_i), \ \ \ 1 \le k \le K, \ \ 1 \le i \le I.
\]
The iid assumption implies  that the mean function
$
E  X_n(\bs; t) = \mu(\bs; t)
$
is the same for each replication. The covariances
\[
\sg_{k\ell; ij} =
E \left\{ \left [ X_n(\bs_k; t_i) - \mu(\bs_k; t_i) \right ]
\left [ X_n (\bs_\ell; t_j) - \mu(\bs_\ell; t_j) \right ]
\right\}
\]
do not depend on $n$ either. The assumption of separability implies
that $ \sg_{k\ell; ij} = u_{k\ell} v_{ij}, $ where $u_{kl}$ does not
depend on $t$ (time) and $v_{ij}$ does not depend on $\bs$ (space).
This relation stated in the matrix form as
\begin{equation} \label{e:sep-mat}
\bSig = \bV \otimes \bU
=
\begin{bmatrix}
v_{11} \bU & v_{12} \bU & \ldots & v_{1I} \bU\\
v_{21} \bU & v_{22} \bU & \ldots & v_{2I} \bU\\
\vdots         &  \vdots        & \vdots & \vdots   \\
v_{I1} \bU & v_{I2} \bU & \ldots & v_{II} \bU\\
\end{bmatrix},
\end{equation}
where $\bU$ is the $K\times K$ matrix with entries $u_{k\ell}$ and
$\bV$ is the $I\times I$ matrix with entries $v_{ij}$.  The matrix
$\bSig$ is $KI \times KI$ and can be viewed as the covariance matrix
of the vectorized matrix $\{X_n(s_k, t_i)\}$ (with $k$ indexing rows
and $i$ columns).  \citetext{mitchell:genton:gumpertz:2006} use the
test statistic
\begin{equation} \label{e:mgg}
\widehat T = N \left\{ K \log \det[\widehat\bV] + I \log \det[\widehat\bU]
- \log \det[\widehat\bSig]\right\},
\end{equation}
where $\widehat\bV, \widehat\bU, \widehat\bSig$ are Gaussian
likelihood estimates defined in Theorem~\ref{t:mult}.  Their approach
is based on Theorem \ref{t:mgg} which justifies a Monte Carlo
approximation for the null distribution of $\widehat T$.  One can,
e.g., use $\mu=0, \bU= \bI_{K}, \bV= \bI_{I}$ to obtain a large number
of replicates of $\widehat T$, and so approximate its null distribution.

\begin{theorem}[\citetext{mitchell:genton:gumpertz:2006}] \label{t:mgg}
If the observations are normal, \eqref{e:sep-mat}
holds, and
$
N > KI,
$
then the distribution of  $\widehat T$  defined
in \eqref{e:mgg} does not depend on $\mu, \bU, \bV$.
\end{theorem}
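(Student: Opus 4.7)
The plan is an invariance argument. Under the null hypothesis, for any fixed $(\mu,\bU,\bV)$ I would construct a linear change of variables that maps the sample to one whose joint law is exactly that of $N$ iid normal observations with the canonical parameters $(\mu,\bU,\bV)=(\bzero,\bI_K,\bI_I)$, and then show that $\widehat T$ takes the same numerical value on the two samples. Since the canonical sample generates a single fixed distribution for $\widehat T$, this will force the distribution of $\widehat T$ under arbitrary $(\mu,\bU,\bV)$ to coincide with it.

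\textbf{Reduction.} View $X_n$ as a $K\times I$ matrix and set $Y_n=\bU^{-1/2}(X_n-\mu)\bV^{-1/2}$. Writing $A=\bV^{-1/2}\otimes\bU^{-1/2}$, the standard vec--Kronecker identity gives $\vc(Y_n)=A\bigl(\vc(X_n)-\vc(\mu)\bigr)\sim N\bigl(\bzero,A(\bV\otimes\bU)A^\top\bigr)=N(\bzero,\bI_{KI})$ iid. Hence $(Y_1,\ldots,Y_N)$ has exactly the law that $(X_1,\ldots,X_N)$ would have under $(\bzero,\bI_K,\bI_I)$, and it suffices to prove the pointwise identity $\widehat T(X)=\widehat T(Y)$.

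\textbf{Equivariance and invariance of $\widehat T$.} Using $\det(\bV\otimes\bU)=\det(\bV)^K\det(\bU)^I$ we may rewrite the statistic as $\widehat T=N\bigl[\log\det(\widehat\bV\otimes\widehat\bU)-\log\det\widehat\bSig\bigr]$. The unrestricted Gaussian MLE is the sample covariance, which is affine equivariant: $\widehat\bSig_Y=A\,\widehat\bSig_X A^\top$. The separable MLE $\widehat\bSig_{\mathrm{sep}}:=\widehat\bV\otimes\widehat\bU$ from Theorem~\ref{t:mult} maximizes the Gaussian likelihood over the cone $\mathcal{K}$ of Kronecker-structured positive definite matrices; because $A$ is itself Kronecker-factored, the congruence $\Sg\mapsto A\Sg A^\top$ preserves both $\mathcal{K}$ and the likelihood up to an additive constant, so $\widehat\bSig_{\mathrm{sep},Y}=A\,\widehat\bSig_{\mathrm{sep},X}A^\top$. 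Consequently $\log\det\widehat\bSig_Y$ and $\log\det\widehat\bSig_{\mathrm{sep},Y}$ both decrease by the same amount $K\log\det\bV+I\log\det\bU$ when passing from $X$ to $Y$; the shifts cancel in the difference and give $\widehat T(X)=\widehat T(Y)$. The assumption $N>KI$ is used only to guarantee $\widehat\bSig$ is positive definite almost surely, so that all log-determinants are finite.

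The main obstacle is the equivariance of the separable MLE. The individual factors $\widehat\bU,\widehat\bV$ are only identified up to the reciprocal scaling $(\bV,\bU)\mapsto(c\bV,c^{-1}\bU)$, so a pointwise identity like $\widehat\bU_Y=\bU^{-1/2}\widehat\bU_X\bU^{-1/2}$ cannot be asserted; equivariance must be formulated at the level of the Kronecker product $\widehat\bSig_{\mathrm{sep}}$ (which is precisely the quantity that appears in $\widehat T$ after the rewrite above). One would verify this by invoking the fixed-point (\emph{flip-flop}) characterization of the MLE supplied by Theorem~\ref{t:mult} and checking that a Kronecker-factored congruence maps any fixed point of the system to a fixed point, so that the product $\widehat\bV\otimes\widehat\bU$ transforms uniquely by $A(\cdot)A^\top$.
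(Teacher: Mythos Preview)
The paper does not supply its own proof of this theorem; it is stated as a cited result attributed to \citetext{mitchell:genton:gumpertz:2006}, so there is no ``paper's approach'' to compare against. Your invariance argument is correct and is precisely the standard route to such pivotality results: one recognizes $\widehat T$ as $-2$ times the log--likelihood ratio between the separable and the unrestricted Gaussian models, and then uses that the Kronecker-factored affine group $X\mapsto \bU^{-1/2}(X-\mu)\bV^{-1/2}$ acts transitively on the null family while preserving both the null and full model classes. Your rewriting $\widehat T=N\bigl[\log\det(\widehat\bV\otimes\widehat\bU)-\log\det\widehat\bSig\bigr]$ is exactly what makes the cancellation transparent, and your decision to phrase equivariance at the level of the product $\widehat\bV\otimes\widehat\bU$ rather than the individual (non-identified) factors is the right way to sidestep the scaling ambiguity. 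The remark that $N>KI$ is needed so that $\widehat\bSig$ is almost surely positive definite (hence $\log\det\widehat\bSig$ is finite) is also accurate. The only point one might flesh out further is uniqueness of the separable MLE product: the fixed-point/flip-flop argument you sketch shows the set of maximizers is mapped to itself, and since the maximized likelihood value is preserved, $\widehat T$ is unchanged regardless of which maximizer is selected.
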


The choice of statistic \eqref{e:mgg} is thus fundamentally justified
by the invariance property stated in Theorem~\ref{t:mgg}.
Perhaps more natural test statistics should be based on some
distance between the matrices $\widehat \bSig$ and
$\widehat\bV \otimes \widehat\bU$. It might be hoped that
a more direct comparison would lead to tests with better
power. However, such statistics are not invariant
in the sense of Theorem~\ref{t:mgg} and their
asymptotic distribution has not been found. The first
contribution of this paper is to derive  the joint
asymptotic null distribution of $\widehat \bSig,
\widehat\bV,  \widehat\bU$ and show how it enables
to derive the limit distribution a several natural test statistic in
the multivariate context. This is addressed in
Section~\ref{s:mult-th}. The proofs are presented in the Appendix.

The second contribution, which motivated our research, is
related to functional data which  are $N$  replications
of the field
\[
X_n(\bs_k,t), \ \ 1 \leq k \leq K, \ \  t \in \cT.
\]
At each spatial location $\bs_k$,  a function with argument $t$ is
observed.  For example, $X_n(\bs_k, t)$ can be the maximum daily
temperature on day $t$ of year $n$ at location $\bs_k$. For historical
climate and environmental data sets of this type, $N$ is about 100,
$K$ can be anything from several dozen to a few hundred, and the
number of measurements per year is $I=365$.  The approach of
\citetext{mitchell:genton:gumpertz:2006} thus cannot be applied
because the condition $N> KI$ is violated. One cannot subdivide the
years into smaller units because the assumption of the identical
distribution would be violated, so the modification of
\citetext{mitchell:genton:gumpertz:2005} cannot be applied.
Our approach exploits the functional structure of the data and uses a
dimension reduction.  Since for historical climate and environmental
data sets $N$ is fairly large, we derive asymptotic tests as $N\to
\infty$ using the results of  Section~\ref{s:mult-th}.  These
developments are described in Section~\ref{s:test}.

The only related work we are aware of which also concerns functional data is given in the (currently) unpublished work of 
%A few days before submitting this paper for review, we became aware of the related work of 
\citetext{aston:pigoli:tavakoli:2015}.  Since both
papers, developed independently and concurrently, are currently unpublished, it is
only appropriate to discuss differences and similarities, refraining
from any evaluative statements. Both papers aim at solving the
same problem, with motivation coming from different data structures.
Our work is motivated by geostatistical functional data, curves
observed at irregularly distributed spatial locations;
\citetext{aston:pigoli:tavakoli:2015} are motivated by data observed
on grids with one dimension which can be called space and the other
time. In their application to phonetic data, space is frequency. Our
approach is based on the joint asymptotic distribution of the
MLE's,  followed, as an option, by dimension reduction;
\citetext{aston:pigoli:tavakoli:2015} first perform dimension
reduction in space and time which allows them to compute
their tests statistic without estimating the full spatio--temporal
covariance. Our method uses an approximation via limiting
distributions; they use bootstrap approximations. The computational
efficiency is thus obtained in very different ways. Regarding asymptotic
theory, ours is based on joint MLE's which require an iterative
procedure to compute;  \citetext{aston:pigoli:tavakoli:2015}
compute the marginal space and time covariances by
integrating out the other dimension,  and then apply the CLT
to the difference projected on a finite number of tensor products.

The remainder of this paper is organized as follows.
Section~\ref{s:sim} compares the finite sample performance of the
tests  derived Section~\ref{s:test}. We show that a test
related to a norm of the difference $\widehat \bSig - \widehat\bV
\otimes \widehat\bU$ has correct size and is most powerful.
It is also more powerful than a Monte Carlo test based on
Theorem~\ref{t:mgg} (with a suitable extension to functional data).
 We apply the tests to an extensively studied Irish
wind data set,  and confirm the conjecture of T. Gneiting that these
space--time data are not separable.

\section{Multivariate theory} \label{s:mult-th}
This section clarifies the behavior of several test statistics based
on normal maximum likelihood estimators. The theory is valid under
the following assumption.
\begin{assumption} \label{a:mult}
Assume $\bX_1, \ldots, \bX_N$ are iid normally
distributed $K \times I $ matrices  with $\E[\bX_n] =\bM$
and $\Cov(\vc(\bX_n)) = \bSig$, where $\bSig$ is
an $KI\times KI$ matrix of full rank.
\end{assumption}

We begin with Theorem~\ref{t:mult} whose proof is based on direct, but
lengthy and tedious calculations of Gaussian likelihoods for
vectorized matrices of various dimensions and solving score equations.
It is placed in the supplement.   Recall that if
$\bA$ is a $K \times I$ matrix, then $\vc(\bA)$ is a column vector of
length $KI$ obtained by stacking the columns of $\bA$ on top of each
other.

\begin{theorem} \label{t:mult}
Under Assumption~\ref{a:mult},  the maximum
likelihood estimators of $\bM$ and $\bSig$ are
\[
\widehat\bM = \frac{1}{N} \sum_{n=1}^N \bX_n, \ \ \ \
\widehat\bSig = \frac{1}{N} \sum_{n=1}^N
\vc(\bX_n - \widehat\bM) (\vc(\bX_n - \widehat\bM))^\top.
\]
If $\bSig$ admits the decomposition
\begin{equation} \label{e:bSigDec}
\bSig = \bV \otimes \bU, \ \ \ {\rm dim} (\bU) = K \times K , \ \
{\rm dim} (\bV) = I\times I,
\end{equation}
where $\bU$ and $\bV$ are of full rank,
then the maximum likelihood estimators of  $\bU$ and $\bV$ satisfy
\begin{align*}
\widehat \bU & = \frac{1}{NI} \sum_{n=1}^N
(\bX_n - \widehat\bM) {\widehat\bV}^{-1} (\bX_n - \widehat\bM)^\top \\
\widehat\bV & =  \frac{1}{NK} \sum_{n=1}^N
(\bX_n - \widehat\bM)^\top {\widehat\bU}^{-1} (\bX_n - \widehat\bM).
\end{align*}
\end{theorem}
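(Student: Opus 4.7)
My approach is to write the Gaussian log-likelihood of the vectorized observations and set its gradients to zero. With the shorthand $\bY_n := \bX_n - \bM$, Assumption~\ref{a:mult} gives
\[
\ell(\bM, \bSig) = -\tfrac{NKI}{2}\log(2\pi) - \tfrac{N}{2}\log\det\bSig - \tfrac{1}{2}\sum_{n=1}^N \vc(\bY_n)^\top \bSig^{-1} \vc(\bY_n).
\]
The first half of the theorem (no separability assumed) is the classical MLE result for an iid sample in $\mbR^{KI}$: differentiating in $\bM$ gives $\widehat\bM = N^{-1}\sum_n \bX_n$, while rewriting $\vc(\bY_n)^\top\bSig^{-1}\vc(\bY_n) = \tr(\bSig^{-1}\vc(\bY_n)\vc(\bY_n)^\top)$ and using $\partial\log\det\bSig/\partial\bSig = \bSig^{-1}$ together with $\partial\tr(\bC\bSig^{-1})/\partial\bSig = -\bSig^{-1}\bC\bSig^{-1}$ yields the stated sample covariance formula for $\widehat\bSig$.

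For the separable case I substitute $\bSig = \bV\otimes\bU$ and exploit three Kronecker identities: $\det(\bV\otimes\bU) = (\det\bU)^I(\det\bV)^K$, $(\bV\otimes\bU)^{-1} = \bV^{-1}\otimes\bU^{-1}$, and the ``vec-trick'' identity $\vc(\bA)^\top(\bV^{-1}\otimes\bU^{-1})\vc(\bA) = \tr(\bU^{-1}\bA\bV^{-1}\bA^\top)$ valid for any $K\times I$ matrix $\bA$ (using symmetry of $\bV$). The log-likelihood then collapses to
\[
\ell(\bM,\bU,\bV) = \text{const} - \tfrac{NI}{2}\log\det\bU - \tfrac{NK}{2}\log\det\bV - \tfrac{1}{2}\sum_{n=1}^N \tr\bigl(\bU^{-1}\bY_n\bV^{-1}\bY_n^\top\bigr).
\]
Setting $\partial\ell/\partial\bM=0$ gives $\sum_n \bU^{-1}\bY_n\bV^{-1}=0$, hence once again $\widehat\bM=\bar\bX$ regardless of $\bU,\bV$. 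Differentiating in $\bU$, with $\bC := \sum_n \bY_n\bV^{-1}\bY_n^\top$ symmetric, produces $NI\,\bU^{-1} = \bU^{-1}\bC\,\bU^{-1}$, which pre- and post-multiplication by $\bU$ turns into $\bU = \bC/(NI)$; this is exactly the first flip-flop equation. The symmetric calculation in $\bV$ delivers the second.

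The main obstacle is not the matrix calculus itself, which is mechanical once the Kronecker reductions are in hand, but the coupled, nonlinear character of the resulting system: the two equations only \emph{characterise} the MLE rather than furnish it in closed form, and there is a scale indeterminacy $\bV\otimes\bU = (\alpha\bV)\otimes(\alpha^{-1}\bU)$ for any $\alpha\neq 0$ that leaves both equations invariant. This is consistent with the ``satisfy'' phrasing of the statement. To convert this into a genuine proof one invokes existence of a maximiser on the open cone of positive-definite matrices (the likelihood is bounded above on compacta within this cone), so any MLE is a critical point of $\ell$ and must therefore solve the displayed system.
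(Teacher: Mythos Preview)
Your proposal is correct and follows essentially the same approach as the paper, which describes its proof as ``direct, but lengthy and tedious calculations of Gaussian likelihoods for vectorized matrices of various dimensions and solving score equations.'' Your use of the Kronecker identities to collapse the log-likelihood to the trace form and then differentiate in $\bU$ and $\bV$ is exactly this program, and your remarks on the scale indeterminacy and the ``satisfy'' phrasing are apt.
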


The estimators $\widehat \bU$ and $\widehat\bV$ are defined
indirectly and must be  computed using an iterative
procedure with some normalization to ensure identifiability.
The following algorithm, \citetext{dutilleul:1999}, produces
$O_P(N^{-1/2})$ consistent estimators. It uses the normalization
 $\tr(\bU_i) = K$.

\begin{algorithm} \label{alg:duti}
Initialize with $\bU_0 = \bI_K$ ($K\times K$
 identity matrix).  For $i=0, 1, 2, \ldots$ calculate
\begin{align*}
\widehat\bV_i & =  \frac{1}{NK} \sum_{n=1}^N
(\bX_n - \widehat\bM)^\top {\widehat\bU}_i^{-1} (\bX_n - \widehat\bM),\\
\widetilde \bU_{i+1} & = \frac{1}{NI} \sum_{n=1}^N
(\bX_n - \widehat\bM) {\widehat\bV}_i^{-1} (\bX_n - \widehat\bM)^\top,  \\
\widehat \bU_{i+1}
& = \frac{K}{\tr(\widetilde \bU_{i+1})} \widetilde \bU_{i+1},
\end{align*}
until convergence is reached.
\end{algorithm}

The most natural statistic to test separability, i.e. $\bSig = \bV \otimes \bU$,
should be based on a difference between
$\widehat\bV \otimes \widehat\bU $ and $\widehat\bSig$. We will
show that the statistic
\[
\widehat T_{F}
= N \| \widehat\bV \otimes \widehat\bU  - \widehat\bSig \|_F^2,
\]
where $\| \cdot \|_F^2$ is the squared Frobenius matrix norm (i.e.
the sum of squares of all entries), converges, and find the asymptotic
distribution. This distribution involves the asymptotic covariance
matrix of $\vc(\widehat\bV \otimes\widehat\bU - \widehat\bSig )$,
which we denote by $\bW$. The form of $\bW$ is complex, see
\eqref{e:wmatrix}. To obtain a chi--square limit distribution, a suitable
quadratic form must be used. This leads to the statistic
\[
\widehat T_{W}
= N \vc(\widehat\bV \otimes \widehat\bU  - \widehat\bSig )^\top
 \widehat \bW^{+} \vc(\widehat\bV \otimes \widehat\bU  - \widehat\bSig ),
\]
where $\widehat \bW$ is an estimator of $\bW$ and $\widehat \bW^{+}$
is its generalized inverse.  A generalized inverse must be used
because $\bU$, $\bV$, and $\bSig$ (and the corresponding estimates)
are all symmetric, and this implies many linear constraints, $\hat
U_{k\ell} = \hat U_{\ell k}$ for example, on the entries of
$\vc(\widehat\bV \otimes\widehat\bU - \widehat\bSig )$ and so of
$\widehat
\bW$.  Using a generalized inverse is equivalent to dropping redundant
entries.

We will also show that the likelihood ratio statistic
\[
\widehat T_{L} =
N\left ( T \log \det (\widehat\bU)  + K \log\det(\widehat\bV)
- \log\det(\widehat\bSig) \right )
\]
discussed in Section~\ref{s:i} has the same limit as $\widehat T_{W}$,
i.e. is asymptotically chi--square with a known number of degrees of
freedom.  The asymptotic chi--square distribution of $\widehat T_{L}$
was claimed by \citetext{lu:zimmerman:2005} without proof. We present
a detailed proof in Section~\ref{ss:mult}.
\citetext{mitchell:genton:gumpertz:2006} did not use this asymptotic
result; they utilized a Monte Carlo finite sample approximation based
on Theorem~\ref{t:mgg}.  We collect our results in
Theorem~\ref{t:mult-asy}.

\begin{theorem} \label{t:mult-asy} Suppose Assumption~\ref{a:mult} and
  decomposition \eqref{e:bSigDec} hold. Let $\bW$ be the $KI \times
  KI$ matrix defined in \eqref{e:wmatrix}, with
  $\gamma_1,\gamma_2,\dots,\gamma_R$ its eigenvalues.  Then, as $N\to
  \infty$,
\[
\widehat T_{L} \sim \chi^2_{d} \quad \mbox{and} \quad \widehat T_{W} \sim \chi^2_{d}
\]
where
\[
d = \frac{KI(KI +1)}{2} -  \frac{K(K +1)}{2} -  \frac{I(I +1)}{2}+1,
\]
and
\[
\widehat T_{F} \sim \sum_{r=1}^R \gamma_r \chi^2_1(r),
\]
where $\{\chi^2_1(r)\}$ are iid chi-square random variables with one
degree of freedom.
\end{theorem}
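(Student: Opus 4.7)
My plan is to derive all three limits from a single joint central limit theorem for $\sqrt{N}\,\vc(\widehat\bV\otimes\widehat\bU - \widehat\bSig)$ and then read off each statistic by a short spectral argument. Under Assumption~\ref{a:mult}, $\sqrt{N}\,\vc(\widehat\bSig-\bSig)$ has the standard Wishart-type Gaussian limit. For the restricted MLE, I would apply standard Gaussian MLE theory to the separable model, using the score equations of Theorem~\ref{t:mult} together with the identifiability constraint $\tr(\bU)=K$, to obtain joint asymptotic normality of $\sqrt{N}\bigl(\vc(\widehat\bU-\bU),\vc(\widehat\bV-\bV)\bigr)$, and then push this through the map $(\bU,\bV)\mapsto\bV\otimes\bU$ by the delta method. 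Combining with the unrestricted CLT gives $\sqrt{N}\,\vc(\widehat\bV\otimes\widehat\bU-\widehat\bSig) \convd N(\bzero,\bW)$ with $\bW$ as in \eqref{e:wmatrix}. The explicit computation of $\bW$---involving Kronecker-product Jacobians, the duplication and commutation matrices that handle the symmetry constraints, and the inverse Fisher information of the separable model restricted to the separable tangent space---is the main calculational obstacle; the final expression falls out cleanly but requires careful bookkeeping.

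With this joint CLT in hand, each of the three statements follows by a short argument. Continuous mapping gives $\widehat T_F \convd \|\bZ\|_2^2$ where $\bZ\sim N(\bzero,\bW)$; diagonalising $\bW$ via its eigenvalues $\gamma_1,\dots,\gamma_R$ expresses this limit as $\sum_r \gamma_r\chi^2_1(r)$. Consistency $\widehat\bW^+\to\bW^+$ in probability holds because $\bW$ has locally constant rank on the separable model, so $\widehat T_W \convd \bZ^\top\bW^+\bZ \sim \chi^2_{\mathrm{rank}(\bW)}$ by the standard Gaussian--pseudoinverse identity. For $\widehat T_L$, the algebraic identity $\det(\bV\otimes\bU)=(\det\bU)^I(\det\bV)^K$ shows that $\widehat T_L$ is exactly $-2$ times the Gaussian log-likelihood ratio; a second-order Taylor expansion of the unrestricted log-likelihood about $\widehat\bSig$ reduces it to a quadratic form in $\sqrt{N}\,\vc(\widehat\bV\otimes\widehat\bU-\widehat\bSig)$ with weight matrix equal to the unrestricted Fisher information at $\bSig$. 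A standard projection identity for nested Gaussian models then shows that this quadratic form agrees with $\widehat T_W$ up to $o_P(1)$, so $\widehat T_L$ has the same $\chi^2$ limit.

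Finally, the degree-of-freedom count. The unconstrained symmetric covariance carries $KI(KI+1)/2$ free parameters, while the separable submodel carries $K(K+1)/2 + I(I+1)/2 - 1$ parameters (the $-1$ from the scale ambiguity $\bV\otimes\bU = (c\bV)\otimes(c^{-1}\bU)$), so the difference is exactly $d$. Matching $\mathrm{rank}(\bW) = d$ reduces to showing that the null-space of $\bW$ coincides with the image of the separable tangent space inside the symmetric matrices, which is precisely the geometric content of the efficient-score decomposition used for $\widehat T_L$ and is therefore available by the end of the previous step.
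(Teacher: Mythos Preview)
Your proposal is correct and follows essentially the same route as the paper: establish the joint asymptotic normality of $(\widehat\bU,\widehat\bV,\widehat\bSig)$ via constrained Gaussian MLE theory and score equations (Theorem~\ref{t:joint}), apply the delta method to the map $(\bU,\bV,\bSig)\mapsto\vc(\bV\otimes\bU)-\vc(\bSig)$ to obtain $\bW$, and then read off each limit and the degree-of-freedom count. The paper carries out the bookkeeping for the cross--covariance between $(\widehat\bU,\widehat\bV)$ and $\widehat\bSig$ explicitly through a family of ``Q matrices'' (covariances of vectorised products of standard normal blocks) rather than the duplication and commutation matrices you invoke, and for $\widehat T_L$ it relies directly on Wilks-type parameter counting rather than your Taylor-expansion-plus-projection argument, but these are differences of presentation rather than of strategy.
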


Theorem~\ref{t:mult-asy} is proven in Section~\ref{ss:mult}. The
three statistics listed in Theorem~\ref{t:mult-asy} are not the
only ones that our theory covers. Theorem~\ref{t:joint},
which specifies the joint asymptotic distribution of
$\widehat\bV, \widehat\bU$ and $\widehat\bSig$,  can be used
to derive the asymptotic distribution of many other reasonable
test statistics.

\section{Tests for functional data} \label{s:test} 
We now show how the
results of Section~\ref{s:mult-th} are applied to testing separability
of geostatistical functional data.  For a reader interested in learning more about functional data methods there are now several introductory books including \citetext{ramsay:silverman:2005,ramsay:hooker:graves:2009,HKbook,hsing:eubank:2015}. 
We consider independent
spatio--temporal random fields $X_n(\cdot, \cdot), 1 \le n \le N,$
which have the same distribution as the field $\left \{ X(\bs, t), \bs
  \in \cS, t \in \cT\right \}$, which satisfies $ E \int_{\cS}
\int_\cT X^2(\bs, t ) d\bs dt < \infty.  $ Then $ X_n(\bs, t) =
\mu(\bs, t) + \eg_n(\bs, t), $ where $\mu \in L^2(\cS\times \cT)$, and
the $\eg_n$ are iid random elements of $ L^2(\cS\times \cT)$ which
satisfy
\[
E \int_{\cS} \int_\cT  \eg_n^2(\bs, t ) d\bs dt < \infty
\ \ \ {\rm and} \ \ \ E  \eg_n(\bs, t ) = 0.
\]
We consider the covariances
\[
\sg(\bs, \bs^\prime; t, t^\prime) = \Cov(X(\bs, t), X(\bs^\prime, t^\prime))
= E[\eg_n(\bs; t) \eg_n(\bs^\prime, t^\prime)].
\]
Our objective is to test
\begin{equation} \label{e:H0}
H_0: \ \ \  \sg(\bs, \bs^\prime; t, \tp) = \cU(\bs, \bs^\prime) \cV(t, \tp).
\end{equation}

As in the multivariate case, the functions $\cU$ and $\cV$ are
uniquely determined only up to multiplicative constants, and a testing
algorithm must include some arbitrary normalization. However, the
P--values of our testing procedures do not depend on this choice.  We
now proceed with the description of  test procedures of
increasing  complexity.

\subsection{Procedure 1: fixed spatial locations, fixed temporal
  basis} \label{ss:p1} We assume that for each $n$ the field $X_n$ is
observed at the same spatial locations $\bs_k, 1 \le k \le K$. We
estimate $\mu(\bs_k, t)$ by the sample average $ \hat\mu(\bs_k,t) =
N^{-1} \sum_{n=1}^N X_n(\bs_k,t), $ and focus in the following on the
covariance structure.
Under $H_0$, the covariances of the observations  are
\[
\Cov ( X_n(\bs_k, t) X_n(\bs_\ell, \tp)) =U(k, \ell) \cV(t,\tp),
\]
with entries $U(k,\ell) = \cU(\bs_k, \bs_\ell)$ forming a $K\times K$
matrix $\bU$, and $\mathcal V$ being the temporal covariance function
over $\cT \times \cT$.  As in the multivariate setting, the estimation
of the matrix $\bU$ and the covariance function $\cV$ must involve an
iterative procedure. In the functional setting, a dimension reduction
is also needed.  Suppose $\{v_j, j \ge 1 \}$ is a basis system in
$L^2(\cT)$ such that for sufficiently large $J$, the functions
\[
X_n^{(J)}(\bs_k, t) = \mu(\bs_k, t) + \sum_{j=1}^J \xi_{jn}(\bs_k) v_j(t)
\]
are  good approximations to the functions $X_n(\bs_k)$.
We thus replace a  large number of time points by a moderate number
$J$,  and seek to reduce the testing of $H_0$  \eqref{e:H0} to testing
the separability of the covariances of the transformed observations given as
$K\times J$ matrices
\begin{equation} \label{e:Xi-n}
\bXi_n = [\xi_{jn}(\bs_k),\  1 \le k \le K,\ 1 \le j \le J].
\end{equation}
The index $j$ should be viewed as a transformed time index.
The number $I$ of the actual time points $t_i$ can be very
large, $J$ is usually much smaller.
The following proposition is easy to prove. It establishes the
connection between the testing problem \eqref{e:H0} and
testing the separability of the transformed data \eqref{e:Xi-n}.
The assumption that the $v_j$ are orthonormal cannot be
removed.

\begin{proposition} \label{p:conect}
For some  orthonormal $v_j$, set
\[
X^{(J)}(\bs_k, t) = \mu(\bs_k, t) + \sum_{j=1}^J \xi_{j}(\bs_k) v_j(t).
\]
If
\begin{equation} \label{e:3.1}
E[\xi_j(\bs_k) \xi_i(\bs_\ell)] = U(k, \ell) V(j,i),
\end{equation}
then
\begin{equation}\label{e:3.2}
\Cov( X^{(J)}(\bs_k, t),  X^{(J)}(\bs_\ell, s) ) = U(k, \ell) \cV(t,s).
\end{equation}
Conversely,  \eqref{e:3.2} implies \eqref{e:3.1}.
The entries  $V(j,i)$ and  $\cV(t,s)$ are related via
\[
V(j,i) = \iint \cV(t,s) v_j(t) v_i(s) dt ds, \ \ \
\cV(t,s) = \sum_{j,i=1}^J V(j,i)  v_j(t) v_i(s).
\]
\end{proposition}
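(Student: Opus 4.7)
The plan is to prove both implications by direct computation, using (i) bilinearity of covariance to handle the finite expansion, and (ii) orthonormality of $\{v_j\}$ to invert the expansion in the converse direction. Because $\mu$ is deterministic, it contributes nothing to the covariance in either direction, so I can work with the centered process $X^{(J)}(\bs_k,t) - \mu(\bs_k,t) = \sum_{j=1}^J \xi_j(\bs_k) v_j(t)$ throughout.

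For the forward direction, assuming \eqref{e:3.1}, I would expand
\[
\Cov\bigl(X^{(J)}(\bs_k,t),\, X^{(J)}(\bs_\ell,s)\bigr)
= \sum_{j=1}^J \sum_{i=1}^J E[\xi_j(\bs_k)\xi_i(\bs_\ell)]\, v_j(t) v_i(s),
\]
then substitute $E[\xi_j(\bs_k)\xi_i(\bs_\ell)] = U(k,\ell)V(j,i)$ and factor out $U(k,\ell)$. This immediately yields \eqref{e:3.2} together with the identification $\cV(t,s) = \sum_{j,i=1}^J V(j,i) v_j(t) v_i(s)$, which is the second of the two stated formulas linking $V$ and $\cV$.

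For the converse, orthonormality of $\{v_j\}$ is the key tool: it lets me recover the coefficients as
\[
\xi_j(\bs_k) = \int_\cT \bigl(X^{(J)}(\bs_k,t) - \mu(\bs_k,t)\bigr)\, v_j(t)\, dt,
\]
since $\int v_j v_i\, dt = \delta_{ji}$. Assuming \eqref{e:3.2} and applying Fubini (trivially justified: finite sums, compact domains, square-integrable field) to pull the two integrals inside the expectation gives
\[
E[\xi_j(\bs_k)\xi_i(\bs_\ell)]
= \iint_{\cT\times\cT} \Cov\bigl(X^{(J)}(\bs_k,t), X^{(J)}(\bs_\ell,s)\bigr) v_j(t) v_i(s)\, dt\, ds
= U(k,\ell)\iint \cV(t,s) v_j(t) v_i(s)\, dt\, ds,
\]
which is exactly $U(k,\ell) V(j,i)$ with $V$ defined by the first of the stated formulas.

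There is no serious obstacle here; the proposition is essentially a bookkeeping statement that separability at the level of the coefficient covariance and separability at the level of the induced spatio--temporal covariance are equivalent under an \emph{orthonormal} basis expansion. The only point worth flagging in the write-up, as the authors themselves note, is that orthonormality cannot be dropped: without $\int v_j v_i\, dt = \delta_{ji}$ the inversion formula for $\xi_j(\bs_k)$ fails and the converse direction breaks.
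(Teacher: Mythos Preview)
Your proposal is correct and is exactly the natural argument the paper has in mind; indeed, the paper omits the proof entirely, stating only that the proposition ``is easy to prove.'' Your bilinearity expansion for the forward direction and orthonormality-based inversion for the converse are the obvious steps, and your remark on why orthonormality cannot be dropped matches the authors' comment.
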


We assume that  $\left\{ v_j, \ 1 \le j \le J\right\}$ is a fixed
orthonormal system, for example the first $J$ trigonometric basis
functions. Slightly abusing notation, consider the matrices
\[
\widehat\bSig \ (KJ\times KJ), \ \ \
\widehat\bU \ (K\times K), \ \ \ \widehat\bV \ (J\times J),
\]
defined as in Theorem~\ref{t:mult}, but with matrices $\bX_n$
replaced by the matrices $\bXi_n$. The index $ j \le J$ now plays
the role of the index $i\le I$ of Section~\ref{s:mult-th}.
To apply tests based on Theorem~\ref{t:mult-asy},
we must recursively calculate $\widehat\bU$ and $\widehat\bV$
using the relations stated in Theorem~\ref{t:mult}. This
can be done using Algorithm~\ref{alg:duti} with
$\bXi_n$ in place of $\bX_n$.
This approach leads to the following test procedure. The
test statistic can be one of the three statistics
introduced in Section~\ref{s:mult-th}.

\begin{procedure} \label{proc:no-PC} \hspace*{15cm}

\noindent {\bf 1.} Choose a deterministic orthonormal basis $v_j, j \ge 1$.

\noindent {\bf 2.} Approximate each curve  $X_n(\bs_k, t)$ by
\[
X_n^{(J)}(\bs_k, t) = \hat\mu(\bs_k, t) + \sum_{j=1}^J
\xi_{jn}(\bs_k) v_j(t).
\]
Construct $K\times J$ matrices $\bXi_n$ defined in \eqref{e:Xi-n}.

\noindent {\bf 3.} Compute the matrix
\[
\widehat\bSig = \frac{1}{N} \sum_{n=1}^N
\vc(\bXi_n - \widehat\bM) (\vc(\bXi_n - \widehat\bM))^\top, \ \ \ \
\widehat\bM = \frac{1}{N} \sum_{n=1}^N \bXi_n.
\]
Using Algorithm~\ref{alg:duti} with
$\bXi_n$ in place of $\bX_n$,
compute
the matrices $\widehat\bU$ and  $\widehat\bV$.

\noindent {\bf 4.}
Estimate the matrix $\bW$ defined by \eqref{e:wmatrix} by
replacing $\bSig, \bU, \bV$ by their estimates.

\noindent {\bf 5.} Calculate the P--value using the limit distribution
specified in Theorem~\ref{t:mult-asy},  with $I$ replaced by $J$.

\end{procedure}

Step 2 can be easily implemented using  {\tt R} function {\tt pca.fd},
see \citetext{ramsay:hooker:graves:2009}.
Several methods of choosing $J$ are available; we  used the
cumulative variance rule requiring that $J$ be so large that
at least 80\% of variance is explained  for each location $\bs_k$.

\subsection{Procedure 2: fixed spatial locations, data driven temporal
basis} \label{ss:p2}
In Section~\ref{ss:p1}, we used a  {\em deterministic}
orthonormal system. To achieve the most
efficient dimension reduction, it is usual to project on a data driven
system, with the functional principal components being used most
often. Since the  sequences of functions are defined at a number of spatial
locations, it is not a priori clear how a suitable orthonormal system
should be constructed, as each sequence $\left\{ X_n(\bs_k), 1 \le n
\le N \right\}$ has different functional principal components
$v_j(\bs_k), j \ge 1$, and Proposition~\ref{p:conect} requires that a
single system be used.  Our next algorithm proposes an approach which
leads to suitable estimates $\widehat\bU,
\widehat\bV$ and $\widehat\bSig$. It is not difficult to
show that these estimators are $O_P(N^{-1/2})$ consistent.

\begin{algorithm} \label{alg:with-PC}
Initialize with $\bU_0 = \bI_K$.

For $i=1,2,\dots$, perform the following two steps
until convergence is reached.

\noindent{\bf 1.} Calculate
\[
\cV_i(t,\tp) = (NK)^{-1} \sum_{n=1}^N
(X_n(\cdot, t) - \hat\mu(\cdot, t))^\top \bU_{i-1}^{-1}
(X_n(\cdot, \tp) - \hat\mu(\cdot, \tp)).
\]
Denote the eigenfunctions and eigenvalues of $\cV_i$ by $\{v_{ij}\}$
and $\{\lambda_{ij}\}$. Determine $J_i$ such that the first $J_i$
eigenfunctions of $\cV_i$ explain at least 85\% of the variance.

\noindent{\bf 2.}
Project each function $X_n(\bs_k, \cdot)$ on the first $J_i$
eigenfunctions of $\mathcal V_i$.  Denote the scores of these  projections by
\[
Z_{in}(\bs_k,j) = \langle X_n(\bs_k, \cdot) - \hat\mu(\bs_k, \cdot),
v_{ij} \rangle
\]
and  calculate
\[
U_{i}(k , \ell) = (NJ_i)^{-1} \sum_{j=1}^{J_i}
 \sum_{n=1}^N \frac{Z_{in}(\bs_k,j) Z_{in}(\bs_\ell,j)}{\lambda_{ij}}.
\]
Normalize $\bU_i$ so that $\tr(\bU_i) = K$.

Let $\{\hat v_j, 1 \le j \le J\}$ denote the final eigenfunctions.
Carry out the final projection $\hat Z_n(\bs_k,j) = \langle X_n(\bs_k,
\cdot) - \hat\mu(\bs_k, \cdot), \hat v_j \rangle$.  For each $n$,
denote by $\widehat\bZ_n$ the $K \times J$ matrix with these entries.
Set
 \begin{equation} \label{e:W2}
\widehat \bSig = \frac{1}{N} \sum_{n=1}^N \vc(\widehat\bZ_n)
\vc(\widehat\bZ_n)^\top
\end{equation}
and apply Algorithm~\ref{alg:duti} with  $\bX_n=\widehat\bZ_n$ to
obtain $\widehat \bU$ and $\widehat \bV$.
\end{algorithm}

Using the above algorithm, the testing procedure is as follows:

\begin{procedure} \label{proc:with-PC} \hspace*{15cm}

\noindent {\bf 1.} Calculate matrices $\widehat\bSig, \widehat\bU,
  \widehat\bV$ according to Algorithm~\ref{alg:with-PC}.

\noindent {\bf 2.} Perform steps 4 and 5 of Procedure~\ref{proc:no-PC}.

\end{procedure}

\subsection{Procedure 3: dimension reduction in both space and time}
\label{ss:p3}

Similar with Procedure 1 we assume that for each $n$ the field $X_n$ is
observed at the same spatial locations $\bs_k, 1 \le k \le K$. We
estimate $\mu(\bs_k, t)$ by the sample average $ \hat\mu(\bs_k,t) =
N^{-1} \sum_{n=1}^N X_n(\bs_k,t), $ and focus in the following on the
covariance structure.
Under $H_0$, the covariances of the observations  are
\[
\Cov ( X_n(\bs_k, t) X_n(\bs_\ell, \tp)) =U(k, \ell) \cV(t,\tp),
\]
with entries $U(k,\ell) = \cU(\bs_k, \bs_\ell)$ forming a $K\times K$
matrix $\bU$, and $\mathcal V$ being the temporal covariance function
over $\cT \times \cT$.

\noindent The testing procedure for dimension reduction in both space and time is as follows:

\begin{procedure} \label{proc:both-PC} \hspace*{15cm}

\noindent {\bf 1.} Choose a deterministic orthonormal basis $v_j, j \ge 1$.

\noindent {\bf 2.} Approximate each curve  $X_n(\bs_k, t)$ by
\[
X_n^{(J)}(\bs_k, t) = \hat\mu(\bs_k, t) + \sum_{j=1}^J
\xi_{jn}(\bs_k) v_j(t).
\]
Construct $K\times J$ matrices $\bXi_n$ defined in \eqref{e:Xi-n} where $J$ is chosen so large that for each $k$ the first $J$ sample eigenvalues explain at least $80\%$ of the variance.  This is Functional Principal Components Analysis carried out on the pooled (across space) sample.

\noindent {\bf 3.} Approximate each vector $(\xi_{jn}(\bs_1), \dots, \xi_{jn}(\bs_K))$ using
\[
\xi_{jn}(\bs_k) = \sum_{l=1}^L \zeta_{lj;n} u_{l}(s_k).
\]
The vectors $(u_l(s_1),\dots, u_l(s_K))$ are the eigenvectors of the following matrix
\[
\tilde U_{i}(k , \ell) = (NJ_i)^{-1} \sum_{j=1}^{J_i}
 \sum_{n=1}^N \frac{\xi_{jn}(\bs_k) \xi_{jn}(\bs_\ell)}{\lambda_{ij}}.
\]
Construct the $L\times J$ matrices $\bZ_n = [\zeta_{lj;n},\  1 \le l \le L,\ 1 \le j \le J]$ where $L$ is chosen large enough so that the first $L$ eigenvalues explain at least $80\%$ of the variance.  This is a multivariate PCA on the pooled (across time) variance adjusted sample.
%Using the $S\times J$ matrices $\bXi_n$ defined in \eqref{e:Xi-n} we use Multivariate Principal Components Analysis to reduce the dimension of space points from $S$ to $L$ and we construct $L\times J$ matrices $\bZ_n = [\zeta_{jn}(\bs_l),\  1 \le l \le L,\ 1 \le j \le J]$. $L$ is chosen large enough so that the first $L$ eigenvalues explain at least $80\%$ of the variance.  More specifically, calculate

\noindent {\bf 4.} Compute the matrix
\[
\widehat\bSig = \frac{1}{N} \sum_{n=1}^N
\vc(\bZ_n - \widehat\bM) (\vc(\bZ_n - \widehat\bM))^\top, \ \ \ \
\widehat\bM = \frac{1}{N} \sum_{n=1}^N \bZ_n.
\]
Using Algorithm~\ref{alg:duti} with
$\bZ_n$ in place of $\bX_n$, $J$ in place of $I$ and $L$ in place of $K$
compute
the matrices $\widehat\bU$ and  $\widehat\bV$.

\noindent {\bf 5.}
Estimate the matrix $\bW$ defined by \eqref{e:wmatrix} by
replacing $\bSig, \bU, \bV$ by their estimates.

\noindent {\bf 6.} Calculate the P--value using the limit distribution
specified in Theorem~\ref{t:mult-asy},  with $I$ replaced by $J$ and $K$ replaced by $L$.

\end{procedure}

Step 2 can be easily implemented using  {\tt R} function {\tt pca.fd} and step 3 by using {\tt R} function {\tt prcomp}.

\section{Finite sample comparison and
application to Irish wind data} \label{s:sim}
We now compare the performance of the tests based on statistics
introduced in  Section~\ref{s:mult-th} and procedures introduced in
Section~\ref{s:test}.  We include in the comparison the modified
approach of  \citetext{mitchell:genton:gumpertz:2006} which
is based on Theorem~\ref{t:mgg} and the spatial principal
components introduced in Section~\ref{ss:p3}. We tabulate the results
for the most general approach described in Section~\ref{ss:p3},
which also leads to most accurate tests  for the simulated data
we used. The relative ranking of the tests remains
the same if the approaches  of Sections~\ref{ss:p1} and
\ref{ss:p2} are applied;  these approaches
perform best  if the number of spatial locations is small.
The approach based on  Theorem~\ref{t:mgg} can typically  be applied
only in conjunction with the procedure of Section~\ref{ss:p3}
so that the condition $N> LJ$ holds.

We Thus consider four test procedures applicable to space--time
functional data, which we denote $T_L, T_F, T_W$ and
$T_{L-MC}$. The first three procedures use
asymptotic critical values of limit distributions
specified in Theorem~\ref{t:mult-asy};
$T_{L-MC}$ uses the Monte Carlo critical values computed
using   $\mu=0, \bU= \bI_{L}, \bV= \bI_{J}$.

To generate data, we use the
following spatio--temporal  covariance function introduced by
\citetext{gneiting:2002}:
\begin{equation} \label{e:gneiting-cov}
\sigma(\bs, \bs^\prime,t, \tp)
=\frac{\sigma^2}{(a|t-t^\prime|^{2\alpha}+1)^\tau}\exp\left(-\frac{c\|\bs - \bs^\prime\|^{2\gamma}}{(a|t - \tp|^{2\alpha}+1)^{\beta\gamma}}\right)
\end{equation}
In this covariance function, $a$ and $c$ are nonnegative scaling
parameters of time and space respectively, $\alpha $ and $\gamma$ are
smoothness parameters which take values in $(0,1]$, $\beta$ is the
separability parameter which takes values in $[0,1]$, $\sigma^2>0$ is
the point-wise variance and finally $\tau\geq\beta d/2$, where $d$ is
the spatial dimension. We focus on the effect of the space--time
interaction parameter, $\beta\in[0,1]$. If $\beta=0$, the
covariance function is separable. As $\beta$ increases the space-time
interaction becomes stronger. We
set  $\gamma=1$, $\alpha=1/2$, $\sigma^2=1$, $a=1$, $c=1$ and
$\tau=1$
so the covariance function becomes:
\begin{equation*}
\sigma(\bs, \bs^\prime,t, \tp)=\frac{1}{(|t-\tp|+1)}
\exp\left(-\frac{\| \bs - \bs^\prime\|^{2}}{(|t -\tp|+1)^{\beta}}\right).
\end{equation*}
We use $I=100$ time points equally
spaced on $[0,1]$ and $K=11$ space points on a  grid in
$[0,1] \times [0,1]$.  The number $K=11$ is motivated by the Irish
wind data considered by \citetext{gneiting:2002}, which
we also study below in this section.   We will
consider different values of the parameter $\beta$ as well as the
number of spatial PC's, $L$, and temporal FPC's, $J$.  We will also
consider different values for the sample size $N$. All empirical
rejection rates are based on one thousand replications, so their
precision is about 0.7 percent for size  (we use significance
level of 5\%), and about two percent for
power.

We study three different scenarios. The first scenario considers
different values of $\beta$. The second scenario examines the effect of
the sample size $N$, while the third scenario the effect of the
number of principal components.  Each table reports  the
rejection rates in percent.

\noindent $\bullet$ Scenario 1: $N=100$, $L=J=2$, $\beta= 0, 0.5, 1$.

\begin{center}
\begin{tabular}{|l|r|r|r|r|}
\hline
& $T_{L-MC}$& $T_L$ & $T_F$ & $T_W$\\
\hline
$\beta=0$&5.1 &5.9 & 4.5 & 3.7\\
\hline
$\beta=0.5$&12.3 & 13.4 & 15.2& 5.4\\
\hline
$\beta=1$& 54.1 & 55.8 & 63.4& 32.3\\
\hline
\end{tabular}
\end{center}
The test $T_F$ has the best balance of size and power.
The test  $T_W$ is a bit conservative here, however
we will see in Scenario 3 that this pattern is not consistent.  The
two likelihood methods do not exhibit  significantly different
rejection rates.

\noindent $\bullet$  Scenario 2: $N=100, 150, 200$, $L=J=2$, $\beta=1$.

\begin{center}
\begin{tabular}{|c|c|c|c|c|}
\hline
& $T_{L-MC}$&$T_L$ & $T_F$ & $T_W$\\
\hline
$N=100$& 54.1& 55.8 & 63.4& 32.3\\
\hline
$N=150$&68.0 &68.3 & 79.8& 29.8\\
\hline
$N=200$&80.4 &80.8 & 91.5& 52.0\\
\hline
\end{tabular}
\end{center}As the sample
size increases, the empirical power is also increasing,
 with the $T_F$--test  preserving its lead in terms of power.

\noindent $\bullet$ Scenario 3: $N=100$, $\bg=0, 1 $, $L+J$ increasing.

\begin{center}
\begin{tabular}{|c|r|r|r|r|}
\hline
$\beta=0$&$T_{L-MC}$& $T_L$ & $T_F$ & $T_W$\\
\hline
$L=J=2$& 5.1 &5.9 & 4.5& 3.7\\
\hline
$L=2,J=3$&5.5 & 6.5 & 5.3& 29.5\\
\hline
$L=3,J=2$&4.6 &5.4 & 4.5& 10.2\\
\hline
$L=J=3$&4.4 &6.9 & 4.7 & 39.4\\
\hline
$L=J=4$& 5.2 &13.6 & 5.2 & 98.1\\
\hline
\hline
$\beta=1$ & & & & \\
\hline
$L=J=2$& 54.1 &55.8 & 63.4& 32.3\\
\hline
$L=2,J=3$& 52.3& 55.0 & 75.5& 91.6\\
\hline
$L=3,J=2$ & 47.3 &50.0 & 74.6& 59.9\\
\hline
$L=J=3$& 54.7 &61.9 & 89.1 & 95.8\\
\hline
$L=J=4$ &79.0 &90.5 & 99.7 & 100.0\\
\hline
\end{tabular}
\end{center}

Only the tests $T_{L-MC}$ and $T_F$ are robust to the number
of the principal components used. This a a very desirable property,
as in all procures of FDA there is some uncertainty as the actual
number of PC's that should be used. The test $T_F$ is more
powerful than  $T_{L-MC}$.

Our overall conclusion is that the norm based test, $T_F$, works better
than the other approaches.  This is due to the fact that it targets
the difference between $\bSig$ and $\bU\otimes \bV$ most directly.
The application of this norm based approach  is possible because
we have derived the asymptotic distribution of $\widehat T_F$.
Even though this distribution is very complex (the matrix $\bW$
defined in \eqref{e:wmatrix} has a complex structure), once
the algorithm is coded, the test can be applied without difficulty.

\medskip

We conclude this section by
considering  the Irish wind data of \citetext{haslett:raftery:1989}
which consists of daily averages of wind speeds at $11$ synoptic
meteorological stations in Ireland during the period $1961-1978$. The
data are available at Statlib, \\ http://lib.stat.cmu.edu/datasets/wind.data.
The geographical locations  of the stations are shown in
\citetext{haslett:raftery:1989}; they are fairly uniformly distributed
over Ireland.  Each functional observation $X_n(s_k,t)$ consists of the
average of wind speed for day $t$, month $n$ ($N=216$), and at
location $s_k$. The left panel of Figure~\ref{f:figure1} shows the daily averages
for the $11$ stations for January $1961$.  The right panel shows a functional box plot of the same data \cite{sun:genton:2011}.

\begin{figure}[t]
\begin{center}
\includegraphics[scale=.5]{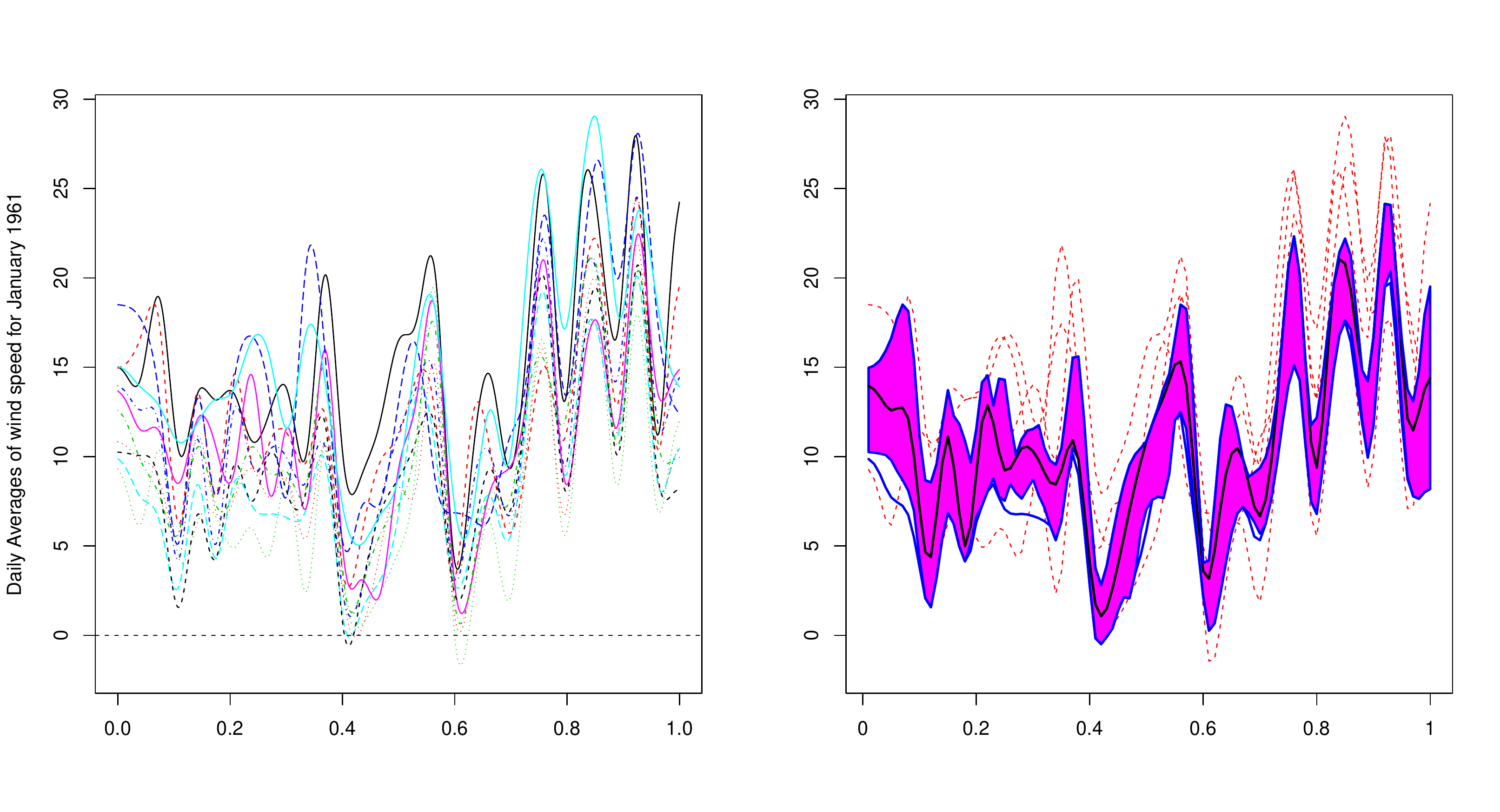}
\caption{Irish wind speed curves for January 1961.  Each is measured at a different location.  The left panel plots the functional observations while the right gives a functional boxplot. \label{f:figure1}}
\end{center}
\end{figure}

\citetext{gneiting:2002} estimated model \eqref{e:gneiting-cov}
on these data and obtained $\hat{\beta}=0.61$, which indicates a
nonseparable covariance structure. We  applied   our
tests to validate this conjecture. All four  tests produced
P--values smaller than 10E-4 for all $L, J \in \{ 2,3, 4 \}$.
The P--values of the $T_L$--test were all smaller than 10E-104 and
of the $T_W$--test smaller than 10E-21. The $T_F$--test had the
largest P--values (but still extremely small).

While the tests fully validate the conclusion of
\citetext{gneiting:2002}, we provide another illustration by applying
them  to residual curves obtained after removing the monthly mean
from each curve; we center all January months, February months, etc.,
separately. This simple transformation removes to a large
extent the annual seasonality, and it is interesting to
see if a nonseparable structure is still needed for the
data so transformed. The P--values for selected combinations
of $L$ and $J$ are shown in Table~\ref{tb:centered-P}.
We now see a much more interesting pattern.  Only when $L$ and $J$ are
larger than $3$, do we see a clear evidence for nonseparability.
A possible explanation  is that the covariance structure is  made
up of two components, one which is separable and one which is not.
The separable component makes up the majority of the variation in the
process which is why the separability is not seen for smaller values
of $L$ and $J$. The pattern of dependence of the P--values
on $L$ and $J$ is inconsistent with what we have seen in Scenario 3
above, but this may be due to the specific parameter values in
\eqref{e:gneiting-cov} we used in the simulations.

\begin{table} [h]
\begin{center}
\begin{tabular}{|c|c|c|c|c|}
\hline
& $T_{L-MC}$ & $T_L$ & $T_F$ & $T_W$\\
\hline
$L=J=2$ & 0.06 & 0.055 &  0.039 &   0.034\\
%\hline
%$K=2,J=3$ &   0.1679925 &     0.3262806 &  0.4323454 \\
%\hline
%$K=3,J=2$ &  6.681985e-12 &  2.287206e-09 &   7.078909e-12\\
\hline
$L=J=3$ & 0.467 & 0.436 &   0.149 &    0.203\\
\hline
$L=J=4$ & $<$10E-6 &  8.079E-24  &   4.237E-09 &  2.146E-09\\
\hline
%$K=2,J=6$ &  0.6367126 &  0.8280695 & 6.230181e-198\\
%\hline
%$K=2,J=8$ &  0.2312128  &  0.837456 &  0.8826104\\
%\hline
%$K=2,J=9$ &   0.02111378  &    0.8353347 &  0.5758674\\
%\hline
\end{tabular}
\end{center}
\caption{\label{tb:centered-P} P--values for the separability tests
applied to deseasonalized wind speed data.}
\end{table}

\clearpage

\appendix

\section{Derivation of the Q matrices} \label{ss:qmat}
This section introduces four  matrices that describe the covariance
structure of products of various vectorized matrices consisting of standard
normal variables. We refer to them collectively as ``Q matrices'', as
we use the symbol $\bQ$ with suitable subscripts and superscripts to
denote them. These matrices appear in the asymptotic distribution of
the vectorized matrices $\widehat\bU, \widehat\bV, \widehat\bSig$,
which, in turn, is used to prove Theorem~\ref{t:mult-asy}. In particular,
the asymptotic distribution of statistic $\widehat T_F$, which
we recommended in Section~\ref{s:sim},  is expressed in terms
of these Q matrices. Some of them are  defined though an algorithm.

\begin{theorem}
If $\bE$ is an $K \times I$ matrix of standard normals, then
\begin{align}
\Cov( \vc(\bE \bE^\top)) = 2I \bQ_K. \label{e:qs}
\end{align}
where
\[
 \bQ_K(i,j) =
\begin{cases}
1 & i = j = k + (k-1)K \mbox{ for } k=1,\dots,K \\
\frac{1}{2} & i = j \neq k + (k-1)K \mbox{ for } k=1,\dots,K \\
\frac{1}{2} & i \neq j = 1 + \left( \frac{(i-1)-((i-1)\mbox{ mod }K)}{K} \right) + ((i-1) \mbox{ mod } K) K \\
0 & \mbox{otherwise},
\end{cases}.
\]
\end{theorem}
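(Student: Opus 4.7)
The plan is to compute $\Cov(\vc(\bE\bE^\top))$ entrywise from first principles and then match the result against the piecewise definition of $\bQ_K$. Writing $A_{k\ell} = (\bE\bE^\top)_{k\ell} = \sum_{i=1}^I E_{ki} E_{\ell i}$, the entry $A_{k\ell}$ occupies position $p = k + (\ell-1)K$ of $\vc(\bE\bE^\top)$. The $(p,q)$-entry of the covariance matrix (with $q = m + (n-1)K$) is therefore $\Cov(A_{k\ell}, A_{mn})$, and it suffices to identify this scalar with $2I\,\bQ_K(p,q)$ for every quadruple $(k,\ell,m,n)$.

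For the scalar computation, I would expand
\begin{equation*}
\Cov(A_{k\ell}, A_{mn}) = \sum_{i,j=1}^I \Cov(E_{ki}E_{\ell i},\ E_{mj}E_{nj}).
\end{equation*}
Independence of the columns of $\bE$ kills all $i \neq j$ terms, reducing the sum to $I$ identical diagonal terms. Each such term is a fourth-moment calculation for a single standard normal vector: Isserlis's theorem gives $\E[Z_k Z_\ell Z_m Z_n] = \delta_{k\ell}\delta_{mn} + \delta_{km}\delta_{\ell n} + \delta_{kn}\delta_{\ell m}$, and subtracting the product of means $\delta_{k\ell}\delta_{mn}$ leaves
\begin{equation*}
\Cov(A_{k\ell}, A_{mn}) = I\bigl(\delta_{km}\delta_{\ell n} + \delta_{kn}\delta_{\ell m}\bigr).
\end{equation*}

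It then remains to verify that the right-hand side equals $2I\,\bQ_K(p,q)$ by splitting into the four cases defining $\bQ_K$. When $p = q = k + (k-1)K$, one has $k = \ell = m = n$ and both Kronecker products equal $1$, yielding $2I$; when $p = q$ is not of this diagonal form, $(m,n) = (k,\ell)$ with $k \neq \ell$, so only the first product contributes, yielding $I$; when $(m,n) = (\ell,k)$ with $k \neq \ell$, only the second product contributes, again yielding $I$; and otherwise both products vanish.

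The only step requiring care is matching the off-diagonal case to the integer prescription $q = 1 + \lfloor(p-1)/K\rfloor + ((p-1)\bmod K)\cdot K$ in the statement. This is pure bookkeeping: for $p = k + (\ell-1)K$ with $1 \le k \le K$ one has $(p-1) \bmod K = k-1$ and $\lfloor(p-1)/K\rfloor = \ell-1$, so the formula yields $q = \ell + (k-1)K$, which is precisely the $\vc$-index of $A_{\ell k}$. Apart from this indexing check the content of the proof is entirely the Isserlis computation; the theorem simply records that the fourth-cumulant structure of $\bE\bE^\top$ is encoded by $\bQ_K$.
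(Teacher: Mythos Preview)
Your proof is correct and follows essentially the same route as the paper: both compute $\Cov\bigl((\bE\bE^\top)_{k\ell},(\bE\bE^\top)_{mn}\bigr)$ entrywise, reduce to a single column by independence, and then case-split on the index pattern. Your use of Isserlis's theorem to obtain the closed form $I(\delta_{km}\delta_{\ell n}+\delta_{kn}\delta_{\ell m})$ is a slightly cleaner packaging than the paper's direct case analysis, but the underlying argument is the same.
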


\begin{proof}
Denote by $e_{kl}$ the independent standard normals, and set
\[
{\be}_k = [e_{k1}, e_{k2}, \ldots, e_{kI}]^\top,  \ \ \ 1 \le k \le K,
\]
so that
\[
\bE = \left[ \begin{matrix}
\be_1^\top \\
\vdots \\
\be_K^\top
\end{matrix} \right].
\]
Then for any $i,j$ in $\{1,\dots,K\}$ we have that the $(i,j)$ entry of
$\Cov(\vc( \bE \bE^\top))$ can be written as
\[
\Cov( \be_{k_1}^\top \be_{l_1}, \be_{k_2}^\top \be_{l_2}),
\]
where we have the relationships
\begin{align*}
& i = k_1 + (l_1 -1)K,  \qquad k_1 = ((i-1) \mbox{ mod } K) + 1, \qquad l_1 = 1+\frac{(i - 1) - (i-1) \mbox{ mod } K}{K} \\
& j = k_2 + (l_2 -1)K,  \qquad k_2 = ((j-1) \mbox{ mod } K) + 1, \qquad l_2 = 1+\frac{(j - 1) - (j-1) \mbox{ mod } K}{K}.
\end{align*}
For the diagonal terms, i.e. $i=j$, we have two settings $k_1 = k_2 =
l_1 = l_2$, in which case the covariance is $2 I$, or alternatively
$k_1 = k_2 \neq l_1 = l_2$ in which case the covariance is $I$.  Since
the former occurs in every $K^{th}$ term, we have established the
proper pattern for the diagonal.

We now need only establish the pattern for the off diagonal.  Every
term in the off diagonal can be expressed as $\Cov(\be_i^\top \be_j,
\be_k^\top \be_l)$, for some $i,j,k,l =1, \dots, K$.  Clearly, if any
one index is different from the other 3, then the covariance is 0.  We
can't have all four indices being equal as that would be a diagonal
element, and we can't have $i = k \neq j = k$ as that would also be a
diagonal element.  If $i=j$ and $k=l$ then two inner products are
independent, and thus the covariance is zero.  Therefore, the only
nonzero off diagonal entries occur when $i= l \neq j = k$, and the
covariance would be $I$.  To determine where in the $K^2 \times K^2$
matrix these occur, we use the change of base formulas.
\end{proof}

We illustrate the form of the matrices $\bQ_K$:
\[
\bQ_2 =
\left(
\begin{matrix}
1.0 & 0 & 0 & 0 \\
  0 & 0.5 & 0.5 & 0 \\
  0 & 0.5 & 0.5 & 0 \\
  0 & 0 & 0 & 1.0
\end{matrix}
\right),
\]
\[
\bQ_3 =
\left(
\begin{matrix}
1.0 & 0 & 0 & 0 & 0 & 0 & 0 & 0 & 0 \\
  0 & 0.5 & 0 & 0.5 & 0 & 0 & 0 & 0 & 0 \\
  0 & 0 & 0.5 & 0 & 0 & 0 & 0.5 & 0 & 0 \\
  0 & 0.5 & 0 & 0.5 & 0 & 0 & 0 & 0 & 0 \\
  0 & 0 & 0 & 0 & 1.0 & 0 & 0 & 0 & 0 \\
  0 & 0 & 0 & 0 & 0 & 0.5 & 0 & 0.5 & 0 \\
  0 & 0 & 0.5 & 0 & 0 & 0 & 0.5 & 0 & 0 \\
  0 & 0 & 0 & 0 & 0 & 0.5 & 0 & 0.5 & 0 \\
  0 & 0 & 0 & 0 & 0 & 0 & 0 & 0 & 1.0
  \end{matrix}
\right).
\]

\begin{theorem}
If $\bE$ is an $K \times I$ matrix of standard normals, then
\begin{align}
\Cov( \vc(\bE \bE^\top), \vc(\bE^\top \bE) ) = 2 \sqrt{IK} \bQ_{K,I},
\label{e:qst}
\end{align}
where $\bQ_{K,I}$ is an $K^2 \times I^2$ matrix given by
\[
\bQ_{K,I}(i,j) = \begin{cases}
(KI)^{-1/2} & i = k_1 + K(k_1-1), j = k_2 + I(k_2 -1)
\ k_1 = 1,\dots,K \ k_2 = 1,\dots, I\\
0 & \mbox{otherwise},
\end{cases}.
\]
\begin{proof}
  Here, each entry of the above covariance matrix is obtained by
  taking two rows of $\bE$ (possibly the same row) forming the inner
  product, taking two columns, taking their inner product, and then
  computing the covariance between the two.  Due to the symmetry of
  this calculation, there are only three possible resulting values:
  when the two rows are different, then the two columns are different,
  or when both the rows and columns are different.  When both rows and
  columns are different, we can (without loss of generality) take the
  first two rows and columns.  In that case, the covariance becomes
\[
\Cov \left( \sum_{k=1}^K e_{1,k} e_{2,k}, \sum_{i=1}^I e_{i,1} e_{i,2}  \right)
= \sum_{k=1}^K \sum_{i=1}^I \Cov \left(  e_{1,k} e_{2,k},  e_{i,1} e_{i,2}  \right).
\]
However, every summand above is zero when $i>2$ or $k>2$ since they will then involve independent variables.  Therefore, we can express the above as
\[
\Cov(e_{1,1}e_{2,1} + e_{1,2} e_{2,2}, e_{1,1}e_{1,2} + e_{2,1} e_{2,2}) = 0.
\]
Hence, any term with two different rows and columns is zero.  A similar result will hold when there are either two different rows or two different columns.  The only nonzero term will stem from taking the same row and same column, in which case the value becomes
\[
\Cov(e_{1,1}e_{1,1} + e_{1,2} e_{1,2}, e_{1,1}e_{1,1} + e_{2,1} e_{2,1}) = \Var(e_{1,1}^2) = 2.
\]
Therefore, every nonzero entry will be 2.  We now only need to determine which entries of the covariance matrix correpond to taking the same row and same column.  Considering the structure induced by vectorizing, the first row of the covariance matrix and every subsequent $K$ rows will correspond to matching the same row of $\bE$.  Similalry, the first and every subsequent $I$ column will correspond to matching the same colmun of $\bE$.  This corresponds to our definition and the result follows.

\end{proof}

\end{theorem}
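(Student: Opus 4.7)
The plan is to show that each entry of $\Cov(\vc(\bE\bE^\top), \vc(\bE^\top\bE))$ is zero unless it corresponds to a pair of diagonal entries, and equals $2$ at those positions; bookkeeping via the vectorization map then yields the stated expression with the factor $2\sqrt{IK}$. First I would index entries explicitly: for $a,b \in \{1,\ldots,K\}$ and $c,d \in \{1,\ldots,I\}$, write
\[
(\bE\bE^\top)_{a,b} = \sum_{l=1}^I e_{al}e_{bl}, \qquad (\bE^\top\bE)_{c,d} = \sum_{k=1}^K e_{kc}e_{kd},
\]
where the $e_{ij}$ are iid $N(0,1)$. Bilinearity of covariance then reduces the task to evaluating $\Cov(e_{al}e_{bl}, e_{kc}e_{kd})$ for each quadruple $(a,b,c,d,l,k)$ and summing.

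Next I would split into four cases according to whether $a=b$ and whether $c=d$. If $a \neq b$ then $E[e_{al}e_{bl}]=0$ by independence, and similarly $E[e_{kc}e_{kd}]=0$ when $c\neq d$, so in the three ``off-diagonal'' cases the covariance collapses to $E[e_{al}e_{bl}e_{kc}e_{kd}]$. I would dispatch these by a short Wick-pairing argument: the only way a four-fold product of independent standard normals has nonzero expectation is if the variables pair up as identical factors; enumerating the three possible pairings and confronting each with the constraint $a\neq b$ or $c\neq d$ (together with vanishing of odd normal moments in the degenerate subcases where, e.g., $l=c$) shows every such term vanishes.

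In the remaining case $a=b$ and $c=d$, the sum becomes $\sum_{l,k}\Cov(e_{al}^2, e_{kc}^2)$. By independence this is nonzero only for the single pair $(l,k)=(c,a)$, contributing $\Var(e_{ac}^2)=2$. Hence the covariance equals $2$ exactly when the row index pair is diagonal in $\bE\bE^\top$ and the column index pair is diagonal in $\bE^\top\bE$, and $0$ otherwise. Finally I would translate to matrix coordinates: the $(k_1,k_1)$-entry of $\bE\bE^\top$ sits at position $i=k_1+K(k_1-1)$ of $\vc(\bE\bE^\top)$, and analogously $(k_2,k_2)$ of $\bE^\top\bE$ lands at $j=k_2+I(k_2-1)$ of $\vc(\bE^\top\bE)$. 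Rewriting $2 = 2\sqrt{IK}\cdot(KI)^{-1/2}$ produces exactly the claimed entries of $\bQ_{K,I}$.

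The main obstacle is the case analysis in the off-diagonal cases: it is conceptually simple but needs care, because Wick's pairing generates several subcases and one has to be sure that every pairing is incompatible with the constraints $a\neq b$ or $c\neq d$, while also handling partial coincidences such as $l=c$ (where odd moments of the standard normal must be invoked to kill the expectation). The vectorization bookkeeping is routine once the nonzero entries have been identified.
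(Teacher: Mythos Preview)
Your proposal is correct and follows essentially the same approach as the paper: both arguments split according to whether the row pair $(a,b)$ and the column pair $(c,d)$ are diagonal, show the three off-diagonal cases vanish, compute the value $2$ in the diagonal case, and then do the vectorization bookkeeping. The only cosmetic difference is that you invoke Isserlis/Wick pairing to kill the off-diagonal expectations uniformly, whereas the paper picks representative indices (first two rows, first two columns) and computes directly; your version is slightly more systematic but not a different route.
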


Some examples $\bQ_{K,I}$ are
\[
\bQ_{2,2} =
\left(
\begin{matrix}
0.5 & 0 & 0 & 0.5 \\
  0 & 0 & 0 & 0 \\
  0 & 0 & 0 & 0 \\
  0.5 & 0 & 0 & 0.5
\end{matrix}
\right)
\]
and
\[
\bQ_{2,3} =
\left(
\begin{matrix}
6^{-1/2} & 0 & 0 & 0 & 6^{-1/2} & 0 & 0 & 0 & 6^{-1/2} \\
  0 & 0 & 0 & 0 & 0 & 0 & 0 & 0 & 0 \\
  0 & 0 & 0 & 0 & 0 & 0 & 0 & 0 & 0 \\
  6^{-1/2} & 0 & 0 & 0 & 6^{-1/2} & 0 & 0 & 0 & 6^{-1/2} \\
\end{matrix}
\right).
\]

\medskip

Before the next theorem, we define the matrix $\tilde \bQ_{R,K}$ via
a pseudo code. \\

\quad Begin Code

\qquad Set $\tilde \bQ_{R,K}$ to be an $R^2=K^2I^2$ by $K^2$ matrix of zeros

\quad \qquad For $i = 1,\dots, R^2$

\qquad \qquad  $l = 1 + \lfloor (i-1)/(K^2I) \rfloor$

\qquad \qquad $k = 1 + \lfloor(i-1-(l-1)K^2T)/(KI) \rfloor$
		
\qquad \qquad $p = 1 + \lfloor(i-1-(l-1)K^2I-(k-1)KI)/(K) \rfloor$
		
\qquad \qquad  $m = i-(l-1)K^2I-(k-1)KI-(p-1)K$

\qquad \qquad If ($p=  l $ and $m \neq k$) then $\tilde \bQ_{R,K}[i,m+(k-1)K] = 1/(2 \sqrt{I})$

\qquad \qquad \quad and  $\tilde \bQ_{R,K}[i,k+(m-1)K] = 1/(2 \sqrt{I})$

\qquad \qquad If ($p = l$ and $m = k$) then $\tilde \bQ_{R,K}[i,m+(m-1)K] = 1/ \sqrt{I}$

\quad \qquad End For Loop

\quad End Code

\begin{theorem}
If $\bE$ is an $K \times I$ matrix of standard normals and
$\bE_\diamond = \vc(\bE)$, then
\begin{align} \label{e:qtilde}
\Cov( \vc(\bE_\diamond \bE_\diamond^\top), \vc(\bE \bE^\top) ) = 2 \sqrt{T} \tilde \bQ_{R,K},
\end{align}
where $\tilde \bQ_{R,K}$ is the $R^2 \times K^2$ matrix defined
by the pseudo code above.
\begin{proof}
  Begin by considering the $(i,j)$ of the desired covariance matrix.
  There exists indices such that the $(i,j)$ entry is equal to
\[
\Cov(e_{m,p} e_{k,l}, e_r^\top e_s),
\]
where $m,k,r,s$ take values $1,\dots, K$ and $p,l$ take values $1,\dots,I$.   Moving from $(r,s)$ to $j$ we have that
\[
j = 1 + (r-1) + (s-1)K,
\]
and the reverse is obtained using
\begin{align*}
s & = 1 + \lfloor (j-1)/K \rfloor \\
r & = j-(s-1)K.
\end{align*}
Moving from $(m,p,k,l)$ to $i$ we have that
\[
i = 1+(m-1) + (p-1)K + (k-1)KI + (l-1)K^2I.
\]
We can move back to $(m,p,k,l)$ from $i$ using
\begin{align*}
l & = 1 + \lfloor (i-1)/(K^2I) \rfloor \\
k & = 1 + \lfloor(i-1-(l-1)K^2I)/(KI) \rfloor \\	
p & = 1 + \lfloor(i-1-(l-1)K^2I-(k-1)KI)/(K) \rfloor \\
m &= i-(l-1)K^2I-(k-1)KI-(p-1)K.
\end{align*}

We can see that the covariance will be zero if any one of $m,k,r$ or $s$ is distinct.  Thus, the only nonzero entries will correspond to either $m = r = k =s$, $m = r \neq k = s$, or $m = s \neq k = r$ (when $m=k \neq r = s$ we get zero).  When all four are equal we get that
\[
\Cov(e_{m,p} e_{m,l}, e_m^\top e_m) = \Cov(e_{m,p} e_{m,l}, e_{m,p}^2 +  e_{m,l}^2 1_{p \neq l}),
\]
which will be zero unless $p=l$, in which case it equals
%it will be equal to 2.
\[
\Cov(e_{m,p}^2, e_{m,p}^2 ) = (\E[e_{m,p}^4] - \E[e_{m,p}^2]^2) = 2.
\]
%We have therefore established the pattern appearing on the first column and every subsequent $S+1$ columns.
We have therefore established the first \textit{if}-statement in the pseudo code.

Turning to the next case, when $ m =r \neq k =s$, we have that
\[
\Cov(e_{m,p} e_{k,l}, e_m^\top e_k)= \Cov(e_{m,p} e_{k,l}, e_{m,p} e_{k,p} + e_{m,l} e_{k,l}1_{l \neq p}),
\]
which is again only nonzero when $p = l$, in which case it will be
\[
 \Cov(e_{m,p} e_{k,p}, e_{m,p} e_{k,p}) = \E[e_{m,p}^2]\E[ e_{k,p}^2] = 1.
\]
An identical result will hold for when $m =s \neq k =r$, which gives both the second and third \textit{if}-statements in the pseudo code, and the proof is established.
\end{proof}
\end{theorem}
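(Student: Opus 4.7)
The plan is to compute the $(i,j)$ entry of the covariance matrix directly, using the fourth-moment identity for jointly Gaussian variables (Isserlis' theorem). Each entry of $\vc(\bE_\diamond \bE_\diamond^\top)$ is indexed by a quadruple $(m,p,k,l)$ with $m,k \in \{1,\dots,K\}$ and $p,l \in \{1,\dots,I\}$ and equals $e_{m,p}\, e_{k,l}$, while each entry of $\vc(\bE \bE^\top)$ is indexed by a pair $(r,s)$ with $r,s \in \{1,\dots,K\}$ and equals $\sum_{q=1}^I e_{r,q}\, e_{s,q}$. First I would write down the explicit bijections between the linear indices $i,j$ and the tuples $(m,p,k,l)$, $(r,s)$ induced by the column-stacking convention; this is precisely the index inversion that appears in the pseudo code and must be tracked carefully so that the column-locations in the final formula are correctly identified.

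With the indices set, a direct application of Isserlis' theorem gives
\begin{align*}
\Cov(e_{m,p} e_{k,l},\, e_{r,q} e_{s,q})
&= \E[e_{m,p} e_{r,q}]\,\E[e_{k,l} e_{s,q}] + \E[e_{m,p} e_{s,q}]\,\E[e_{k,l} e_{r,q}] \\
&= \mathbf{1}_{p=q=l}\bigl(\mathbf{1}_{m=r}\mathbf{1}_{k=s} + \mathbf{1}_{m=s}\mathbf{1}_{k=r}\bigr),
\end{align*}
and summing over $q=1,\dots,I$ collapses the triple equality $p=q=l$ to the single condition $p=l$. Consequently every entry of the covariance matrix vanishes unless $p=l$, which is exactly the ``if $p=l$'' branch inside the pseudo code.

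The remainder is a binary case analysis on whether $m=k$ or $m\neq k$. If $p=l$ and $m\neq k$, the two indicators $\mathbf{1}_{m=r,k=s}$ and $\mathbf{1}_{m=s,k=r}$ are mutually exclusive and each contributes $1$, so the covariance equals $1$ at exactly the two columns $j=m+(k-1)K$ and $j=k+(m-1)K$; dividing by the prescribed normalization $2\sqrt{I}$ reproduces the value $1/(2\sqrt{I})$ at those two positions. If $p=l$ and $m=k$, both indicators reduce to $\mathbf{1}_{m=r=s}$, giving a covariance of $2$ at the single column $j=m+(m-1)K$, which after dividing by $2\sqrt{I}$ yields $1/\sqrt{I}$, again matching the pseudo code. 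All other entries are zero, so up to the stated scaling factor the covariance matrix coincides with $\tilde\bQ_{R,K}$.

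The main obstacle is not mathematical---the Gaussian moment computation is immediate---but rather notational: one must verify that the affine index-decoding produced by the column-stacking vectorization agrees exactly with the floor-division formulas inside the pseudo code. A sanity check on the smallest non-trivial case $K=I=2$, analogous to the explicit $\bQ_2$ and $\bQ_{2,2}$ matrices displayed earlier, should be enough to confirm that both the positions and the values line up as claimed.
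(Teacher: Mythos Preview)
Your proof is correct and follows essentially the same strategy as the paper: identify each entry of the covariance matrix with a Gaussian fourth moment indexed by $(m,p,k,l;r,s)$, show only $p=l$ survives, and then split on whether $m=k$. The paper carries out the moment computation by ad hoc case analysis on the pattern of coincidences among $m,k,r,s$, whereas you invoke Isserlis' theorem to obtain the indicator formula $\mathbf{1}_{p=l}(\mathbf{1}_{m=r,k=s}+\mathbf{1}_{m=s,k=r})$ in one line and then read off the two cases; this is tidier but not a different idea.
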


One example of $\tilde \bQ_{R,K}$ is
\[
\tilde \bQ_{4,2}
 = \left(
\begin{matrix}
 2^{-1/2} & 0 & 0 & 0 \\
  0 & 8^{-1/2} & 8^{-1/2} & 0 \\
  0 & 0 & 0 & 0 \\
  0 & 0 & 0 & 0 \\
  0 & 8^{-1/2} & 8^{-1/2} & 0 \\
  0 & 0 & 0 & 2^{-1/2} \\
  0 & 0 & 0 & 0 \\
  0 & 0 & 0 & 0 \\
  0 & 0 & 0 & 0 \\
  0 & 0 & 0 & 0 \\
  2^{-1/2} & 0 & 0 & 0 \\
  0 & 8^{-1/2} & 8^{-1/2} & 0 \\
  0 & 0 & 0 & 0 \\
  0 & 0 & 0 & 0 \\
  0 & 8^{-1/2} & 8^{-1/2} & 0 \\
  0 & 0 & 0 & 2^{-1/2}
  \end{matrix}
\right).
\]

For the last Q matrix, we also use pseudo code which is only slightly
different  from the code defining $\tilde \bQ_{R, K}$.

% Pseudo Code %
\quad Begin Code

\qquad Set $\breve \bQ_{R,I}$ to be an $R^2=K^2I^2$ by $I^2$ matrix of zeros

 \qquad For $i = 1,\dots, R^2$

\quad \qquad   $l = 1 + \lfloor (i-1)/(K^2I) \rfloor$

\quad \qquad $k = 1 + \lfloor(i-1-(l-1)K^2I)/(KI) \rfloor$
		
\quad \qquad $p = 1 + \lfloor(i-1-(l-1)K^2I-(k-1)KI)/(K) \rfloor$
		
\quad \qquad  $m = i-(l-1)K^2I-(k-1)KI-(p-1)K$

\quad \qquad If ($m =  k $ and $p \neq l$) then $\breve \bQ_{R,I}[i,p+(l-1)I] = 1/(2 \sqrt{K})$

\quad \qquad \quad and  $\breve \bQ_{R,I}[i,l+(p-1)I] = 1/(2 \sqrt{K})$

\quad \qquad If ($m = k$ and $p = l$) then $\breve \bQ_{R,I}[i,l+(l-1)I] = 1/ \sqrt{K}$

\qquad End For Loop

\quad End Code

\begin{theorem}
If $\bE$ is an $K \times I$ matrix of standard normals and
$\bE_\diamond = \vc(\bE)$,  then
\begin{align} \label{e:qbreve}
\Cov( \vc(\bE_\diamond \bE_\diamond^\top), \vc(\bE^\top \bE) ) = 2 \sqrt{K} \breve\bQ_{R,I},
\end{align}
where $\breve \bQ_{R,I}$ is the  $R^2 \times I^2$ matrix
defined by the pseudo code above.
\end{theorem}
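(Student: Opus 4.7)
The plan is to mirror the argument used for $\tilde\bQ_{R,K}$ in the preceding theorem, only with the roles of rows and columns of $\bE$ exchanged, since now we form column-by-column inner products ($\bE^\top\bE$) rather than row-by-row inner products ($\bE\bE^\top$). First I would parse the index conventions. The row index $i$ of $\vc(\bE_\diamond\bE_\diamond^\top)$ decodes to a quadruple $(m,p,k,l)$ via the same formulas as in the $\tilde\bQ$ proof, so the corresponding entry of $\bE_\diamond\bE_\diamond^\top$ is $e_{m,p}\,e_{k,l}$; the column index $j$ of $\vc(\bE^\top\bE)$ decodes to a pair $(p',l')$ with $j = 1 + (p'-1) + (l'-1)I$, so the corresponding entry of $\bE^\top\bE$ is $\sum_{r=1}^K e_{r,p'}\,e_{r,l'}$. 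The $(i,j)$ entry of the covariance matrix is therefore $\sum_{r=1}^K \Cov(e_{m,p}e_{k,l},\,e_{r,p'}e_{r,l'})$.

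Next I would classify which of these entries is nonzero. Since the $e_{i,j}$ are i.i.d.\ standard normal, a fixed-$r$ summand vanishes unless the multiset $\{(m,p),(k,l)\}$ equals $\{(r,p'),(r,l')\}$. Because both elements of the right-hand pair share the same first coordinate $r$, this forces $m=k=r$, which kills every term with $m\neq k$ and reduces the sum to a single summand when $m=k$. So the only nonzero entries of the covariance matrix are those with $m=k$, and for these the problem is independent of $m$, as the pseudo code also encodes.

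For the residual bookkeeping I would split into the cases $p=l$ and $p\neq l$. If $p=l$ the left factor is $e_{m,p}^2$; Isserlis-type fourth-moment calculations then give $\Cov(e_{m,p}^2,e_{m,p'}e_{m,l'}) = \Var(e_{m,p}^2) = 2$ when $p'=l'=p$ and $0$ otherwise. If $p\neq l$ the left factor is $e_{m,p}e_{m,l}$; the covariance with $e_{m,p'}e_{m,l'}$ equals $1$ precisely when $\{p',l'\}=\{p,l\}$, i.e.\ for the two ordered pairs $(p',l')=(p,l)$ and $(p',l')=(l,p)$, and zero otherwise (the subcase $p'=l'$ within this case gives zero by a standard odd-moment argument). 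These are exactly the patterns produced by the two \textit{if}-statements in the definition of $\breve\bQ_{R,I}$: multiplying $1/\sqrt{K}$ at column $l+(l-1)I$ by $2\sqrt{K}$ recovers the value $2$ in the $p=l$ subcase, while multiplying $1/(2\sqrt{K})$ at columns $p+(l-1)I$ and $l+(p-1)I$ by $2\sqrt{K}$ recovers the value $1$ in the two subcases of $p\neq l$.

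The only real obstacle is the index bookkeeping, not any probabilistic subtlety: one must keep careful track of which $(m,p,k,l)$ tuple corresponds to row $i$ of a vectorized $KI\times KI$ matrix, and which $(p',l')$ corresponds to column $j$ of a vectorized $I\times I$ matrix, and then verify that the nonzero values above land in exactly the positions produced by the pseudo code. Because the definition of $\breve\bQ_{R,I}$ is the template of $\tilde\bQ_{R,K}$ with the roles of $K$ and $I$ (and of the row and column indices of $\bE$) interchanged, the calculations are point-for-point parallel to those in the $\tilde\bQ_{R,K}$ proof, and the normalization constant $2\sqrt{K}$ arises in the same way that $2\sqrt{I}$ arose there.
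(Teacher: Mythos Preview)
Your proposal is correct and follows essentially the same route as the paper: both compute the $(i,j)$ entry by Gaussian fourth-moment (Isserlis) bookkeeping on the index tuples, reduce to the condition $m=k$, and then split into the diagonal ($p=l$, value $2$) and off-diagonal ($p\neq l$, value $1$ at the two symmetric column positions) cases matching the pseudo code. The only cosmetic difference is ordering: you first expand $e_{(p')}^\top e_{(l')}$ as a sum over the row index and use multiset matching to force $m=k$ before handling the time indices, whereas the paper first matches the time indices $p,l$ against the column pair (its $r,s$) and then checks that $m=k$ is required; the computations and conclusions are identical.
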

\begin{proof}
Begin by considering the $(i,j)$ of the desired covariance matrix.  There exists indices such that the $(i,j)$ entry is equal to
\[
\Cov(e_{m,p} e_{k,l}, e_{(r)}^\top e_{(s)}),
\]
where $m,k$ take values $1,\dots, K$ and $p,l,r,s$ take values $1,\dots,I$.   Moving from $(r,s)$ to $j$ we have that
\[
j = 1 + (r-1) + (s-1)I,
\]
and the reverse is obtained using
\begin{align*}
s & = 1 + \lfloor (j-1)/I \rfloor \\
r & = j-(s-1)I.
\end{align*}
Moving from $(m,p,k,l)$ to $i$ we have that
\[
i = 1+(m-1) + (p-1)K + (k-1)KI + (l-1)K^2I.
\]
We can move back to $(m,p,k,l)$ from $i$ using
\begin{align*}
l & = 1 + \lfloor (i-1)/(K^2I) \rfloor \\
k & = 1 + \lfloor(i-1-(l-1)K^2I)/(KI) \rfloor \\	
p & = 1 + \lfloor(i-1-(l-1)K^2I-(k-1)KI)/(K) \rfloor \\
m &= i-(l-1)I^2K-(k-1)KI-(p-1)K.
\end{align*}
We can see that the covariance will be zero if any one of $p,l,r$ or $s$ is distinct.  Thus, the only nonzero entries will correspond to either $p = r = l =s$, $p = r \neq l = s$, or $p = s \neq l = r$ (when $p=l \neq r = s$ we get zero).  When all four are equal we get that
\[
\Cov(e_{m,p} e_{k,p}, e_{(p)}^\top e_{(p)}) = \Cov(e_{m,p} e_{k,p}, e_{m,p}^2 +  e_{k,p}^2 1_{m \neq k}),
\]
which will be zero unless $m=k$, in which case it equals it will be equal to 2.
%\[
%\Cov(e_{m,p}^2, e_{m,p}^2 ) = (\E[e_{m,p}^4] - \E[e_{m,p}^2]^2) = 2.
%\]
%We have therefore established the pattern appearing on the first column and every subsequent $S+1$ columns.
We have therefore established the first \textit{if}-statement in the
 pseudo code.

Turning to the next case, when $ p =r \neq l =s$, we have that
\[
\Cov(e_{m,p} e_{k,l}, e_{(p)}^\top e_{(l)})= \Cov(e_{m,p} e_{k,l}, e_{m,p} e_{m,l} + e_{k,p} e_{k,l}1_{m \neq k}),
\]
which is again only nonzero when $m = k$, in which case it will be
\[
 \Cov(e_{m,p} e_{m,l}, e_{m,p} e_{m,l}) = \E[e_{m,p}^2]\E[ e_{m,l}^2] = 1.
\]
An identical result will hold for when $p =s \neq l =r$, which gives
both the second and third \textit{if}-statements in the pseudo code, and
the proof is established.
\end{proof}

\section{Proof of Theorem~\ref{t:mult-asy}}
\label{ss:mult}
We begin by establishing in Theorem~\ref{t:joint} the joint null limit
distribution of the vectors $\vc(\hat \bU - \bU), \vc(\hat \bV -
\bV)$, and $\vc(\hat \bSig - \bSig)$. We first define several matrices
that appear in this distribution.  Recall  the Q
matrices derived  in Section~\ref{ss:qmat}: the matrix $\bQ_K$ is
defined in \eqref{e:qs}, $\bQ_{K,I}$ in \eqref{e:qst}, $\tilde
\bQ_{R,K}$ in \eqref{e:qtilde}, and $\breve \bQ_{R,I}$ in
\eqref{e:qbreve}. Denote by  $(\cdot)^+$  a generalized inverse.
We  define the following generalized information
matrices:
\aln{
\cI_{\bU,\bV} =
& \frac{1}{2}
\left(
\begin{matrix}
\bU^{-1/2} \otimes \bU^{-1/2} & 0 \\
0& \bV^{-1/2} \otimes \bV^{-1/2}
\end{matrix}
\right) \\
& \times \left(
\begin{matrix}
I \bQ_K & \sqrt{IK} \bQ_{K,I} \\
\sqrt{IK} \bQ_{I,K}& K \bQ_I
\end{matrix}
\right)
\left(
\begin{matrix}
\bU^{-1/2} \otimes \bU^{-1/2} & 0 \\
0& \bV^{-1/2} \otimes \bV^{-1/2}
\end{matrix}
\right), \\
\cI_{\bU,\bV}^c
= &  (\bD (\bD^\top \cI_{\bU, \bV} \bD)^{+} \bD^\top)^+,
}
where $\bD$ is an $(K^2+I^2) \times (K^2+I^2 - 1)$ matrix whose
columns are orthonormal and are perpendicular to $\vc(\bI_{K^2+I^2})$,
and
\aln{
\cI_{\bSig} = \frac{1}{2} (\bSig^{-1/2} \otimes \bSig^{-1/2}) \bQ_{R} (\bSig^{-1/2} \otimes \bSig^{-1/2}),
}
where $R = KI$.

\begin{theorem}\label{t:joint}  Suppose Assumption~\ref{a:mult}
and decomposition \eqref{e:bSigDec} hold.
Assume further that $\tr(\bU) = K$. Then
\begin{align*}
\sqrt{N} \left( \begin{matrix}
\vc(\hat \bU - \bU) \\
\vc(\hat \bV - \bV) \\
\vc(\hat \bSig - \bSig)
\end{matrix}
\right)
\overset{\cD}{\to} N( {\bf 0}, \bGg).
\end{align*}
The asymptotic covariance matrix $\bGg$ is defined as follows.  The
asymptotic covariance of $(\vc(\hat \bU - \bU), \vc(\hat \bV - \bV))$
is given by $ (\cI_{\bU,\bV}^c)^+$, of $ \vc(\hat \bSig - \bSig)$ is
given by $\cI_{\bSig}^+$, and the cross covariance matrix between the
two is
\[
 \frac{1}{2}  (\cI^{c}_{\bU,\bV})^{+}
\left(
\begin{matrix}
\bU^{-1/2} \otimes \bU^{-1/2} & 0 \\
0& \bV^{-1/2} \otimes \bV^{-1/2}
\end{matrix}
\right)
\left( \begin{matrix}
\sqrt{I} \tilde \bQ_{R,K}^\top \\
\sqrt{K} \breve \bQ_{R,I}^\top
\end{matrix} \right)
( \bSig^{-1/2} \otimes \bSig^{-1/2})
  \cI^{+}_{\bSig}.
\]
\end{theorem}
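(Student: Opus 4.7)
The plan is to treat Theorem~\ref{t:joint} as a constrained maximum-likelihood asymptotics problem, reducing every random quantity to a quadratic form in iid standard normals so that the four Q matrices of Appendix~\ref{ss:qmat} supply every second moment. Writing $\bX_n - \bM = \bU^{1/2}\bE_n\bV^{1/2}$ with $\bE_n$ an iid $K\times I$ matrix of standard normals, and equivalently $\vc(\bX_n-\bM) = (\bV^{1/2}\otimes\bU^{1/2})\vc(\bE_n)$, expresses every empirical moment appearing in the MLE formulas of Theorem~\ref{t:mult} as a function of the $\bE_n$. The fluctuation of $\hat\bM$ enters the score equations only quadratically and contributes at order $N^{-1}$, so it is absorbed into an $o_P(1)$ remainder and $\bM$ can be treated as known in the CLT step.

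For the $\vc(\hat\bSig-\bSig)$ marginal I would apply the CLT directly to the sample covariance of the iid Gaussian vectors $\vc(\bX_n-\bM)$. Specialising $\Cov(\vc(\bE\bE^\top))=2I\bQ_K$ to a single column ($I=1$, $K=R$) gives $\Cov(\vc(\bE_\diamond\bE_\diamond^\top))=2\bQ_R$; sandwiching by $\bSig^{1/2}\otimes\bSig^{1/2}$ yields the limit covariance $2(\bSig^{1/2}\otimes\bSig^{1/2})\bQ_R(\bSig^{1/2}\otimes\bSig^{1/2})$, which matches $\cI_{\bSig}^+$ after noting that $\bQ_R$ is an orthogonal projection, hence its own pseudoinverse. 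For the $(\vc(\hat\bU-\bU),\vc(\hat\bV-\bV))$ marginal I would linearize the coupled implicit equations of Theorem~\ref{t:mult}: writing $\hat\bU=\bU+N^{-1/2}\Delta_\bU$, $\hat\bV=\bV+N^{-1/2}\Delta_\bV$, expanding $\hat\bV^{-1}$ and $\hat\bU^{-1}$ to first order, and collecting the $N^{-1/2}$ terms produces a linear system in $(\Delta_\bU,\Delta_\bV)$ driven by the centred averages $(NI)^{-1}\sum_n\bE_n\bE_n^\top-\bI_K$ and $(NK)^{-1}\sum_n\bE_n^\top\bE_n-\bI_I$. After conjugating by $\bU^{-1/2}\otimes\bU^{-1/2}$ and $\bV^{-1/2}\otimes\bV^{-1/2}$ the coefficient matrix is exactly $\cI_{\bU,\bV}$; it is singular because $(\bU,\bV)\mapsto(c\bU,c^{-1}\bV)$ leaves the likelihood invariant. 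The linearized normalization $\tr(\Delta_\bU)=0$ coming from $\tr(\hat\bU)=K$ restricts inversion to the complement of the ambiguity direction spanned by the columns of $\bD$, producing the asymptotic covariance $(\cI_{\bU,\bV}^c)^+$ via the pull-back/push-forward identity $(\bD(\bD^\top\cI_{\bU,\bV}\bD)^+\bD^\top)^+$.

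For the cross-covariance --- the main new step and the most delicate --- I would use the two linearizations to express $\vc(\hat\bSig-\bSig)$ and $(\vc(\hat\bU-\bU),\vc(\hat\bV-\bV))$ as linear functionals of the same centred empirical averages of $\vc(\bE_\diamond\bE_\diamond^\top)$, $\vc(\bE_n\bE_n^\top)$, and $\vc(\bE_n^\top\bE_n)$. The cross-covariance then reduces algebraically to the two ``Q cross-identities'' $\Cov(\vc(\bE_\diamond\bE_\diamond^\top),\vc(\bE\bE^\top))=2\sqrt{I}\,\tilde\bQ_{R,K}$ and $\Cov(\vc(\bE_\diamond\bE_\diamond^\top),\vc(\bE^\top\bE))=2\sqrt{K}\,\breve\bQ_{R,I}$ from the last two theorems of Appendix~\ref{ss:qmat}. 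Sandwiching these (with appropriate transposes) between the square-root matrices produced by the linearizations --- $\bSig^{-1/2}\otimes\bSig^{-1/2}$ on the $\bSig$ side and the block-diagonal of $\bU^{-1/2}\otimes\bU^{-1/2}$ and $\bV^{-1/2}\otimes\bV^{-1/2}$ followed by left multiplication by $(\cI_{\bU,\bV}^c)^+$ on the $(\bU,\bV)$ side --- reproduces the block cross-covariance stated in the theorem. Joint normality then follows from a single multivariate CLT applied to the joint vector of empirical averages.

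The main obstacle is the second step: correctly linearizing the coupled, implicit MLE equations while simultaneously tracking the symmetry of $\bU$, $\bV$, $\bSig$ (which forces pseudoinverses and the Q matrices instead of plain identity blocks) and the one-dimensional scalar ambiguity (which makes $\cI_{\bU,\bV}$ singular and forces the projector $\bD$). Once the standardization $\bE_n=\bU^{-1/2}(\bX_n-\bM)\bV^{-1/2}$ is fixed, the remaining work is routine but requires extensive bookkeeping to verify that the various quadratic forms in the $\bE_n$ produce exactly the $\bQ_K$, $\bQ_I$, $\bQ_{K,I}$, $\tilde\bQ_{R,K}$, $\breve\bQ_{R,I}$ patterns --- the source of the paper's remark that the arguments are ``direct, but lengthy and tedious''.
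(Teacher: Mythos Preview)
Your proposal is correct and follows essentially the same route as the paper: standardize via $\bE_n=\bU^{-1/2}(\bX_n-\bM)\bV^{-1/2}$, absorb the $\hat\bM$ fluctuation into an $o_P(1)$ remainder, reduce all second moments to the Q matrices of Appendix~\ref{ss:qmat}, and handle the scale ambiguity through the projector $\bD$. The only cosmetic difference is that the paper frames the $(\hat\bU,\hat\bV)$ block via the textbook constrained-MLE expansion $\sqrt{N}(\hat\theta-\theta)\approx\cI^{+}\,N^{-1/2}S$ (citing Ferguson and Moore \emph{et al.}), whereas you linearize the implicit fixed-point equations of Theorem~\ref{t:mult} directly; since those equations \emph{are} the score equations set to zero, the two computations coincide line by line and produce the same $\cI_{\bU,\bV}$, $\cI_{\bU,\bV}^c$, and cross block.
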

\begin{proof}
From standard theory for MLEs, \citetext{ferguson:1996} Chapter 18, we
can use the partial derivatives of the log likelihood function
(score equations) to find the Fisher information as well as asymptotic
expressions for the MLEs.  One can show that the cross terms of the
Fisher information involving $\bM$ and $\bV, \bU$ and $\bSig$ are all zero,
meaning that the estimate of the $\bM$ is asymptotically independent
of $\widehat \bU, \widehat\bV$ and $\widehat \bSig$. We therefore
 treat in the following $\bM$ as known.

  We start by working with $\bU$ and $\bV$.  Applying the constrained
  likelihood methods described in \citetext{moore:etal:2008},
  asymptotically, $\widehat \bU$ and $\widehat \bV$ are jointly normally
  distributed with means $\bU$ and $\bV$ and covariance given by the
  generalized inverse of the constrained Fisher information matrix.
  Starting with $\bU$ we have that unconstrained score equation is
  given by \aln{
    & \frac{1}{\sqrt{N}}\frac{\partial l(\bM, \bU, \bV)}{\partial \bU} \\
    & =
    \frac{1}{2 \sqrt{N}} \sum_{n=1}^N [\bU^{-1}   (\bX_n - \bM) \bV^{-1} (\bX_n - \bM)^\top \bU^{-1} - T \bU^{-1}]  \\
    & =
    \frac{1}{2 \sqrt{N}} \sum_{n=1}^N [\bU^{-1/2}   \bE_n \bE_n^\top \bU^{-1/2} - T \bU^{-1}]  \\
    & = \frac{1}{2 \sqrt{N}} \sum_{n=1}^N [\bU^{-1/2} [\bE_n
    \bE_n^\top - I \bI_{K \times K} ]\bU^{-1/2} ].  } To get a handle
  on the unconstrained Fisher information matrix (and therefore the
  covariance matrix), it will be easier to work with the vectorized
  version
\[
\frac{1}{2 \sqrt{N}} (\bU^{-1/2} \otimes \bU^{-1/2} )\sum_{n=1}^N \vc[ \bE_n \bE_n^\top - I \bI_{K \times K} ].
\]
Notice that we will have a complete handle on the above if we can
understand the form for the covariance of $\vc[ \bE_n \bE_n^\top - I
\bI_{K \times K} ]$.  However, this is a term that in no way depends
on the underlying parameters as it is composed entirely of iid
standard normals.  We label $ \bQ_K = (2 I)^{-1} \Cov(\vc[ \bE_n
\bE_n^\top - I \bI_{K \times K} ])$ and its explicit form is given in
\eqref{e:qs}.

The part of the Fisher information matrix for $\vc(\bU)$ is given by
\[
\frac{I}{2}
(\bU^{-1/2} \otimes \bU^{-1/2} )
\bQ_K
(\bU^{-1/2} \otimes \bU^{-1/2} ).
\]
Identical arguments give that the part of the Fisher information matrix for $\vc(\bV)$ is given by
\[
\frac{K}{2}
(\bV^{-1/2} \otimes \bV^{-1/2} )
\bQ_I
(\bV^{-1/2} \otimes \bV^{-1/2} ).
\]
The joint unconstrained Fisher information matrix for $\widehat \bU$ and $\widehat \bV$ is given by
\aln{
\cI_{\bU,\bV} =
\frac{1}{2}
&\left(
\begin{matrix}
\bU^{-1/2} \otimes \bU^{-1/2} & 0 \\
0& \bV^{-1/2} \otimes \bV^{-1/2}
\end{matrix}
\right) \\
& \times \left(
\begin{matrix}
I \bQ_K & \sqrt{IK} \bQ_{I,K} \\
\sqrt{IK} \bQ_{I,K}& K \bQ_I
\end{matrix}
\right)
\left(
\begin{matrix}
\bU^{-1/2} \otimes \bU^{-1/2} & 0 \\
0& \bV^{-1/2} \otimes \bV^{-1/2}
\end{matrix}
\right),
}
where $\bQ_{K,I}$ is defined in \eqref{e:qst}.  The constrained version is then given by
\[
\cI_{\bU,\bV}^c = (\bD (\bD^\top \cI_{\bU, \bV}\bD) \bD^+)^+.
\]
Recall that $\bD$ is an $(K^2+I^2) \times (K^2+I^2 - 1)$ matrix whose
columns are orthonormal and are perpendicular to $\vc(\bI_{K^2+I^2})$.
The form for $\bD$ come from the gradient of the constraint $\tr(\bU) = K$.

The last piece we need is the joint behavior of $\widehat \bU$ (or
$\widehat \bV$) and the estimator $\widehat \bSig$.  The score
equation for $\bSig$ can be expressed as \aln{ \frac{\partial l(\mu,
    \bSig)}{\partial \bSig} & = - \frac{N}{2} \bSig^{-1}
  + \frac{1}{2} \bSig^{-1} \left( \sum_{n=1}^N  (Y_n - \mu)  (Y_n-\mu) \right) \bSig^{-1} \\
  & = - \frac{N}{2} \bSig^{-1}
  + \frac{1}{2} \bSig^{-1/2} \left( \sum_{n=1}^N  E_n  E_n^\top \right) \bSig^{-1/2} \\
  & = \frac{1}{2} \bSig^{-1/2} \left( \sum_{n=1}^N E_n E_n^\top -
    \bI_{KI \times KI}\right) \bSig^{-1/2}.  } Using the same
arguments as before, we get that Fisher information matrix for
$\vc(\bSig)$ is \aln{ \cI_\bSig = \frac{1}{2} (\bSig^{-1/2} \otimes
  \bSig^{-1/2}) \bQ_{R} (\bSig^{-1/2} \otimes \bSig^{-1/2}).  } For
the joint behavior, we use the following asymptotic expression for the
MLEs, \citetext{ferguson:1996} Chapter 18, \[ \sqrt{N} \left(
\begin{matrix}
 \vc(\widehat \bU - \bU) \\
\vc(\widehat \bV - \bV) \\
\end{matrix}
\right)
= \frac{1}{\sqrt{N}} (\cI^{c}_{\bU,\bV})^+
\left(
\begin{matrix}
\vc \left(    \frac{\partial l(\bM, \bU, \bV)}{\partial \bU} \right) \\
\vc \left(   \frac{\partial l(\bM, \bU, \bV)}{\partial \bV} \right) \\
\end{matrix}
\right) + o_P(1),
\]
and
\[
\sqrt{N}\vc(\widehat \bSig - \bSig) = \frac{1}{\sqrt{N}} \cI^{+}_{\bSig} \frac{\partial l(\mu, \bSig)}{\partial \bSig} + o_P(1).
\]
For the covariance between $\widehat \bSig$ and $\widehat \bU$ (or $\widehat \bV$) we obtain two more matrices, called $\tilde \bQ_{R,K}$ and $\breve \bQ_{R,I}$ which satisfy
\begin{align*}
\Cov \left[\vc(\bE_{\diamond n}\bE_{\diamond n}^\top
- \bI_{R \times R}), \vc(\bE_n\bE_n^\top - I \bI_{K \times K})\right]
= 2 \sqrt{I} \tilde \bQ_{R,K}, \\
\Cov \left[\vc(\bE_{\diamond n} \bE_{\diamond n}^\top
- \bI_{R \times R}), \vc(\bE_n^\top \bE_n - K\bI_{I \times I})\right]
= 2 \sqrt{K} \breve \bQ_{R,I}.
\end{align*}
Recall that the diamond subscript indicates vectorization and the
definitions of $\tilde \bQ_{R,K}$ and $\breve \bQ_{R,I}$ can be found
in \eqref{e:qtilde} and \eqref{e:qbreve}, respectively.  The
cross--covariance matrix for $\sqrt{N}\vc(\widehat \bSig - \bSig )$
and $\sqrt{N}\vc(\widehat \bU - \bU)$ is then given by
\aln{ &
  (\cI^{c}_{\bU,\bV})^{+} \Cov \left( \left(
\begin{matrix}
\vc \left(   N^{-1/2} \frac{\partial l(M, U, V)}{\partial U} \right) \\
\vc \left(  N^{-1/2}  \frac{\partial l(M, U, V)}{\partial V} \right) \\
\end{matrix}
\right) ,
\vc\left(N^{-1/2} \frac{\partial l(M, \Sigma)}{\partial \Sigma}\right)
 \right)
  \cI^{+}_{\bSig} \\
  & = \frac{1}{2}  (\cI^{c}_{\bU,\bV})^{+}
\left(
\begin{matrix}
\bU^{-1/2} \otimes \bU^{-1/2} & 0 \\
0& \bV^{-1/2} \otimes \bV^{-1/2}
\end{matrix}
\right)
\left( \begin{matrix}
\sqrt{I} \tilde \bQ_{R,K}^\top \\
\sqrt{K} \breve \bQ_{R,I}^\top
\end{matrix} \right)
( \bSig^{-1/2} \otimes \bSig^{-1/2})
  \cI^{+}_{\bSig} \\
}
\end{proof}

\medskip

\noindent{\it Proof of Theorem~\ref{t:mult-asy}:}
Since we have the joint asymptotic distribution for $\widehat \bU,
\widehat \bV,$ and $\widehat \bSig$, we can use the delta method to
find the asymptotic distributions of desired test statistics, and in
particular, we can find the form of $\bW$, the asymptotic covariance
matrix of $ \vc( \widehat \bV \otimes \widehat \bU) - \vc( \widehat
\bSig).  $ To apply the delta method, we need the partial derivatives.
Taking the derivative with respect to $V_{i,j}$ yields
\[
\vc(\bone_{i,j} \otimes \bU)
\]
and with respect to $U_{k,l}$
\[
\vc(\bV \otimes \bone_{k,l}).
\]
So the matrix of partials with respect to $\vc(\bV)$ is
\[
\bG_\bV = \left (\begin{matrix}
 \vc(\bone_{1,1} \otimes \bU)^\top  \\
   \vc(\bone_{2,1} \otimes \bU)^\top \\
    \vdots  \\
     \vc(\bone_{I,I} \otimes \bU)^\top
\end{matrix} \right),
\]
with respect to $\vc(\bU)$ is
\[
\bG_\bU = \left (\begin{matrix}
\vc(\bV \otimes \bone_{1,1})^\top \\
 \vc(\bV \otimes \bone_{2,1})^\top \\
  \vdots &  \\
   \vc(\bV \otimes \bone_{K,K})^\top
\end{matrix} \right),
\]
and with respect to $\vc(\bSig)$ is just $(-1)$ times the $KI \times KI$ identity matrix.  We therefore have that
\[
\vc( \widehat \bV \otimes  \widehat \bU) - \vc( \widehat \bSig) \approx
 \left (\begin{matrix}
 \bG_{\widehat \bU} \\
 \bG_{\widehat \bV} \\
- \bI_{ST \times ST}
 \end{matrix} \right)^\top
 \left(\begin{matrix}
 \vc(\widehat \bU) \\
 \vc(\widehat \bV) \\
\vc(\widehat \bSig)
 \end{matrix} \right).
\]
This implies that
\begin{equation}  \label{e:wmatrix}
 \bW  =
\left (\begin{matrix}
 \bG_{ \bU} \\
 \bG_{ \bV} \\
- \bI_{KI \times KI}
 \end{matrix} \right)^\top
 \bGa
 \left (\begin{matrix}
 \bG_{ \bU} \\
 \bG_{\bV} \\
- \bI_{KI \times KI}
 \end{matrix} \right).
\end{equation}

The degrees of freedom are obtained by noticing that under the
alternative $\bSig$ has $KI(KI+1)/2$ free parameters, while under the
null there $ K(K +1)/2+ I(I +1)/2-1$, where the last $-1$ is
included because we have one constraint ($\tr(\bU) = K$).

\small

\bibliographystyle{oxford3}
\bibliography{sep}

\begin{thebibliography}{}

\bibitem[\protect\citeauthoryear{Aston \bgroup \em et al.\egroup
  }{2015}]{aston:pigoli:tavakoli:2015}
Aston, J. A.~D., Pigoli, D. and Tavakoli, S.
\newblock \shortcite{aston:pigoli:tavakoli:2015}.
\newblock Tests for separability in nonparametric covariance operators of
  random surfaces.
\newblock Technical report. University of Cambridge, Cambridge, UK.

\bibitem[\protect\citeauthoryear{Delicado \bgroup \em et al.\egroup
  }{2010}]{delicado:2010}
Delicado, P., Giraldo, R., Comas, C. and Mateu, J.
\newblock \shortcite{delicado:2010}.
\newblock Statistics for spatial functional data: some recent contributions.
\newblock {\em Environmetrics}, {\bf 21}{\bf ,} 224--239.

\bibitem[\protect\citeauthoryear{Dutilleul}{1999}]{dutilleul:1999}
Dutilleul, P.
\newblock \shortcite{dutilleul:1999}.
\newblock The {MLE} algorithm for the matrix normal distribution.
\newblock {\em Journal of Statistical Computing and Simulation}, {\bf 64}{\bf
  ,} 105--123.

\bibitem[\protect\citeauthoryear{Ferguson}{1996}]{ferguson:1996}
Ferguson, T.~S.
\newblock \shortcite{ferguson:1996}.
\newblock {\em A {C}ourse in {L}arge {S}ample {T}heory}.
\newblock Chapman \& Hall, London.

\bibitem[\protect\citeauthoryear{Fuentes}{2006}]{fuentes:2006}
Fuentes, M.
\newblock \shortcite{fuentes:2006}.
\newblock Testing for separability of spatial--temporal covariance functions.
\newblock {\em Journal of Multivariate Analysis}, {\bf 136}{\bf ,} 447--466.

\bibitem[\protect\citeauthoryear{Genton}{2007}]{genton:2007}
Genton, M.~G.
\newblock \shortcite{genton:2007}.
\newblock Separable approximations of space-time covariance matrices.
\newblock {\em Environmetrics}, {\bf 18}{\bf ,} 681--695.

\bibitem[\protect\citeauthoryear{Gneiting}{2002}]{gneiting:2002}
Gneiting, T.
\newblock \shortcite{gneiting:2002}.
\newblock Nonseparable, stationary covariance functions for space--time data.
\newblock {\em Journal of the American Statistical Association}, {\bf 97}{\bf
  ,} 590--600.

\bibitem[\protect\citeauthoryear{Gromenko and
  Kokoszka}{2012}]{gromenko:kokoszka:2012mf}
Gromenko, O. and Kokoszka, P.
\newblock \shortcite{gromenko:kokoszka:2012mf}.
\newblock Testing the equality of mean functions of spatially distributed
  curves.
\newblock {\em Journal of the Royal Statistical Society (C)}, {\bf 61}{\bf ,}
  715--731.

\bibitem[\protect\citeauthoryear{Gromenko and
  Kokoszka}{2013}]{gromenko:kokoszka:2013np}
Gromenko, O. and Kokoszka, P.
\newblock \shortcite{gromenko:kokoszka:2013np}.
\newblock Nonparametric inference in small data sets of spatially indexed
  curves with application to ionospheric trend determination.
\newblock {\em Computational Statistics and Data Analysis}, {\bf 59}{\bf ,}
  82--94.

\bibitem[\protect\citeauthoryear{Gromenko \bgroup \em et al.\egroup
  }{2015}]{gromenko:kokoszka:reimherr:2015}
Gromenko, O., Kokoszka, P. and Reimherr, M.
\newblock \shortcite{gromenko:kokoszka:reimherr:2015}.
\newblock Detection of change in the spatio--temporal mean function.
\newblock {\em Journal of the Royal Statistical Society (B)}, {\bf 00}{\bf ,}
  000--000; under minor revision.

\bibitem[\protect\citeauthoryear{Gromenko \bgroup \em et al.\egroup
  }{2012}]{gromenko:kokoszka:2012big}
Gromenko, O., Kokoszka, P., Zhu, L. and Sojka, J.
\newblock \shortcite{gromenko:kokoszka:2012big}.
\newblock Estimation and testing for spatially indexed curves with application
  to ionospheric and magnetic field trends.
\newblock {\em The Annals of Applied Statistics}, {\bf 6}{\bf ,} 669--696.

\bibitem[\protect\citeauthoryear{Haas}{1995}]{haas:1995}
Haas, T.~C.
\newblock \shortcite{haas:1995}.
\newblock Local {P}rediction of a {S}patio-{T}emporal {P}rocess with an
  {A}pplication to {W}et {S}ulfate {D}eposition.
\newblock {\em Journal of the {A}merican {S}tatistical {A}ssociation}, {\bf
  90}{\bf ,} 1189--1199.

\bibitem[\protect\citeauthoryear{Haslett and
  Raftery}{1989}]{haslett:raftery:1989}
Haslett, J. and Raftery, A.~E.
\newblock \shortcite{haslett:raftery:1989}.
\newblock Space--time modelling with long--memory dependence: assesing
  {I}reland's wind power resource.
\newblock {\em Appl. Statist.}, {\bf 38}{\bf ,} number 1, 1--50.

\bibitem[\protect\citeauthoryear{Hoff}{2011}]{hoff:2011}
Hoff, P.~D.
\newblock \shortcite{hoff:2011}.
\newblock Separable covariance arrays via the tucker product, with applications
  to multivariate relational data.
\newblock {\em Bayesian Analysis}, {\bf 6}{\bf ,} 179--196.

\bibitem[\protect\citeauthoryear{Horv{\'a}th and Kokoszka}{2012}]{HKbook}
Horv{\'a}th, L. and Kokoszka, P.
\newblock \shortcite{HKbook}.
\newblock {\em Inference for {F}unctional {D}ata with {A}pplications}.
\newblock Springer.

\bibitem[\protect\citeauthoryear{Hsing and Eubank}{2015}]{hsing:eubank:2015}
Hsing, Tailen and Eubank, Randall
\newblock \shortcite{hsing:eubank:2015}.
\newblock {\em Theoretical foundations of functional data analysis, with an
  introduction to linear operators}.
\newblock John Wiley \& Sons.

\bibitem[\protect\citeauthoryear{Liu \bgroup \em et al.\egroup
  }{2014}]{liu:ray:hooker:2014}
Liu, C., Ray, S. and Hooker, G.
\newblock \shortcite{liu:ray:hooker:2014}.
\newblock Functional principal components analysis of spatially correlated
  data.
\newblock Technical report. University of Glasgow.
\newblock Available at http://arxiv.org/abs/1411.4681.

\bibitem[\protect\citeauthoryear{Lu and Zimmerman}{2005}]{lu:zimmerman:2005}
Lu, N. and Zimmerman, D.
\newblock \shortcite{lu:zimmerman:2005}.
\newblock The likelihood ratio test for a separable covariance matrix.
\newblock {\em Statistics \& Probability Letters}, {\bf 73}{\bf ,} 449--457.

\bibitem[\protect\citeauthoryear{Mitchell \bgroup \em et al.\egroup
  }{2005}]{mitchell:genton:gumpertz:2005}
Mitchell, M.~W., Genton, M.~G. and Gumpertz, M.~L.
\newblock \shortcite{mitchell:genton:gumpertz:2005}.
\newblock Testing for separability of space--time covariances.
\newblock {\em Environmetrics}, {\bf 16}{\bf ,} 819--831.

\bibitem[\protect\citeauthoryear{Mitchell \bgroup \em et al.\egroup
  }{2006}]{mitchell:genton:gumpertz:2006}
Mitchell, M.~W., Genton, M.~G. and Gumpertz, M.~L.
\newblock \shortcite{mitchell:genton:gumpertz:2006}.
\newblock A likelihood ratio test for separability of covariances.
\newblock {\em Journal of Multivariate Analysis}, {\bf 97}{\bf ,} 1025--1043.

\bibitem[\protect\citeauthoryear{Moore \bgroup \em et al.\egroup
  }{2008}]{moore:etal:2008}
Moore, J.~T., B.~M.~Sadler, Brian~M and Kozick, R.~J.
\newblock \shortcite{moore:etal:2008}.
\newblock Maximum-likelihood estimation, the {C}ram{\'e}r--{R}ao bound, and the
  method of scoring with parameter constraints.
\newblock {\em IEEE Transactions on Signal Processing}, {\bf 56}{\bf ,}
  895--908.

\bibitem[\protect\citeauthoryear{Paul and Peng}{2011}]{paul:peng:2011}
Paul, D. and Peng, J.
\newblock \shortcite{paul:peng:2011}.
\newblock Principal components analysis for sparsely observed correlated
  functional data using a kernel smoothing approach.
\newblock {\em Electronic Journal of Statistics}, {\bf 5}{\bf ,} 1960--2003.

\bibitem[\protect\citeauthoryear{Ramsay \bgroup \em et al.\egroup
  }{2009}]{ramsay:hooker:graves:2009}
Ramsay, J., Hooker, G. and Graves, S.
\newblock \shortcite{ramsay:hooker:graves:2009}.
\newblock {\em Functional {D}ata {A}nalysis with {R} and {MATLAB}}.
\newblock Springer.

\bibitem[\protect\citeauthoryear{Ramsay and
  Silverman}{2005}]{ramsay:silverman:2005}
Ramsay, J.~O. and Silverman, B.~W.
\newblock \shortcite{ramsay:silverman:2005}.
\newblock {\em Functional {D}ata {A}nalysis}.
\newblock Springer.

\bibitem[\protect\citeauthoryear{Schabenberger and
  Gotway}{2005}]{schabenberger:gotway:2005}
Schabenberger, O. and Gotway, C.~A.
\newblock \shortcite{schabenberger:gotway:2005}.
\newblock {\em Statistical {M}ethods for {S}patial {D}ata {A}nalysis}.
\newblock Chapman \& Hall/CRC.

\bibitem[\protect\citeauthoryear{Sherman}{2011}]{sherman:2011}
Sherman, M.
\newblock \shortcite{sherman:2011}.
\newblock {\em Spatial {S}tatistics and {S}patio--{T}emporal data: {C}ovariance
  {F}unctions and {D}irectional {P}roperties}.
\newblock Wiley.

\bibitem[\protect\citeauthoryear{Sun and Genton}{2011}]{sun:genton:2011}
Sun, Y. and Genton, M.~G.
\newblock \shortcite{sun:genton:2011}.
\newblock Functional boxplots.
\newblock {\em Journal of Computational and Graphical Statistics}, {\bf 20}{\bf
  ,} 316--334.

\bibitem[\protect\citeauthoryear{Sun \bgroup \em et al.\egroup
  }{2012}]{sun:li:genton:2012}
Sun, Y., Li, B. and Genton, M.G.
\newblock \shortcite{sun:li:genton:2012}.
\newblock Geostatistics for large datasets.
\newblock In {\em Advances and {C}hallenges in {S}pace-time {M}odelling of
  {N}atural {E}vents}, {\em Berlin} (eds E. Porcu, J.M. Montero and M.
  Schlather), {chapter}~3, {pp.} 55--77.
\newblock Springer.

\end{thebibliography}

\end{document}